\setlist[enumerate]{itemsep=2pt,topsep=3pt}
\setlist[itemize]{itemsep=2pt,topsep=3pt}
\setlist[enumerate,1]{label=(\alph*)}
\renewcommand{\phi}{\varphi}
\renewcommand{\epsilon}{\varepsilon}
\renewcommand{\leq}{\leqslant}
\renewcommand{\geq}{\geqslant}
\newcommand\blfootnote[1]{%
  \begingroup
  \renewcommand\thefootnote{}\footnote{#1}%
  \addtocounter{footnote}{-1}%
  \endgroup
}
\newcommand{\setntn}[2]{ \{ #1 : #2 \} }
\newcommand{\fore}{\therefore \quad}
\newcommand{\1}{\mathbbm 1}
\newcommand*\diff{\mathop{}\!\mathrm{d}}
\newcommand{\cf}{\mathscr C}
\newcommand{\bB}{\mathscr B}
\newcommand{\mM}{\mathscr M}
\newcommand{\eE}{\mathcal E}
\newcommand{\pP}{\mathscr P}
\newcommand{\RR}{\mathbbm R}
\newcommand{\NN}{\mathbbm N}
\newcommand{\PP}{\mathbbm P}
\newcommand{\EE}{\mathbbm E}
\theoremstyle{plain}
\newtheorem{theorem}{Theorem}[section]
\newtheorem{lemma}[theorem]{Lemma}
\newtheorem{proposition}[theorem]{Proposition}
\theoremstyle{definition}
\newtheorem{example}{Example}[section]
\newtheorem{assumption}{Assumption}[section]
\begin{document}


\title{}

\begin{center}
  \Large
  Firm Entry and Exit with Unbounded Productivity Growth\blfootnote{The author thanks Hugo Hopenhayn and Takashi Kamihigashi for valuable
    comments on an earlier draft of this manuscript (which was titled
    \emph{Power laws without Gibrat's law}). Financial support from ARC
    grant FT160100423 and Alfred P. Sloan Foundation grant G-2016-7052 is gratefully
    acknowledged.}

    \vspace{1em}

  \large
  John Stachurski
  \par \bigskip

  \small
  Research School of Economics, Australian National University \bigskip

  \normalsize
  \today
\end{center}

\begin{abstract}
    In Hopenhayn's (1992) entry-exit model productivity is bounded, implying that the predicted firm size distribution cannot match the power law tail observable in the data. In this paper we remove the boundedness assumption and, in this more general setting, provide an exact characterization of existence of stationary equilibria, as well as a novel sufficient condition for existence based on treating production as a Lyapunov function. We also provide new representations of the rate of entry and aggregate supply. Finally, we prove that the firm size distribution has a power law tail under a very broad set of productivity growth specifications.

    \noindent
    \textit{JEL Classifications:} L11, D21\\
    \textit{Keywords:} Firm size distribution, entry-exit model, Pareto tail, Gibrat's law
\end{abstract}

\maketitle

\section{Introduction}

Observed productivity growth is driven by the decisions of firms. Existing firms
innovate, while new firms disrupt the status quo by bringing fresh ideas.  The specifics of this
process have far-reaching implications for long run growth in per capita output,
output volatility, employment, the income distribution, the wealth distribution,
and the development of technology and human capital.

One of the most important models of productivity growth developed in the last
few decades is the entry-exit model of \cite{hopenhayn1992entry}. This model
forms a cornerstone of modern quantitative economics and researchers have
extended the core ideas to study a broad range of topics, from technical change
and development to aggregate volatility and business cycle
fluctuations.\footnote{The literature is very large.  Well known examples
    include \cite{hopenhayn1993job}, \cite{clementi2006theory},
\cite{ericson1995markov}, \cite{luttmer2011mechanics},
\cite{acemoglu2015innovation}, \cite{cao2018firm}, and
\cite{carvalho2019large}.}

In building his model, \cite{hopenhayn1992entry} makes one key technical
assumption to streamline his analysis: firm productivity is bounded. This
assumption simplifies modeling the exit decisions of firms (since value
functions are bounded and Bellman operators are ordinary contraction mappings),
as well as the proof of existence and uniqueness of competitive equilibrium, and
the analysis of the stationary distribution.  Much of the subsequent theoretical
and quantitative work follows his assumption.

At the same time, under the bounded productivity assumption the model of
\cite{hopenhayn1992entry} cannot match an important feature of the data: the
firm size distribution is extremely heavy tailed -- in fact a  power law, with a Pareto tail coefficient
near unity.\footnote{In other words, for some measure of firm
    size $S$, there are positive constants $k$ and $\alpha$ close to one such
    that $\PP\{S > s\} \approx k s^{-\alpha}$ for large $s$. A well known
    reference is \cite{axtell2001zipf}. The power law finding has been
    replicated in many studies.  See, for example, \cite{gaffeo2003size}, who
    treats the G7 economies, as well as \cite{cirillo2009upper},
\cite{kang2011changes} and \cite{zhang2009zipf}, who use Italian, Korean and
Chinese data respectively.} 
This power law property is
significant for a range of aggregate outcomes. For example,
idiosyncratic firm-level shocks
        generate substantial aggregate volatility when the firm size
        distribution has a power law.\footnote{See, for example,
            \cite{nirei2006threshold}, \cite{gabaix2011granular}, and
        \cite{carvalho2019large}.}
Moreover,  the right tail of the firm size distribution affects the 
        income and wealth distributions, partly due to the high concentration of
        firm ownership and entrepreneurial equity.\footnote{See, for example,
        \cite{benhabib2018skewed}.  The impact of capital income on income and
    wealth dispersion has risen in recent years, as documented and analyzed in
        \cite{kacperczyk2018investor}.}  The income and wealth
        distributions in turn affect other economic phenomena, including the
        composition of aggregate demand and the growth rate of aggregate
        productivity.\footnote{The literature on the connection between of the distribution
        of income and wealth and growth rates is extensive.  A recent example combining
        theory and empirics is \cite{halter2014inequality}.}

One additional advantage of modeling the power law in
the firm size distribution is calibration and testing:  the matching the Pareto
tail index in the data serves as a valuable additional restriction to fit
parameters.

At the same time, analyzing the entry-exit model of \cite{hopenhayn1992entry}
without the boundedness assumption on productivity is nontrivial. One reason is
that the lifetime profits of firms are potentially unbounded, requiring a new
approach to the firm decision problem.  Second, the time invariance condition
for the equilibrium measure of firms concerns stationary distributions of Markov
transitions that are possibly transient, due to the unboundedness of
productivity.  Third, aggregate output is potentially infinite, since
integration across productivity states is over an unbounded set.

In this paper we provide a comprehensive analysis of the entry-exit model of
\cite{hopenhayn1992entry} without the boundedness assumption.
We show that the complications described above can be cleanly
handled by using (i) a weighted supremum norm for the firm decision problem and
(ii) Kac's Theorem on positive recurrence to handle productivity dynamics.
Through this combination, we provide an exact necessary and sufficient condition
for existence and uniqueness of a stationary recursive equilibrium in the
unbounded setting. In particular, we show that a stationary recursive
equilibrium exists if and only if the expected lifetime output of firms is
finite. This generalizes the result in \cite{hopenhayn1992entry}, where expected
lifetime output is automatically finite under the stated restrictions on the
productivity process.

Since expected lifetime output is endogenous, we also provide a sufficient
condition based on a drift restriction over
productivity dynamics. Drift conditions are a well-known technique for
controlling Markov processes on unbounded state spaces (see, e.g.,
\cite{meyn2012markov}), with the main idea being to obtain a Lyapunov function
on the state space such that (a) the function becomes large as the state
diverges and (b) the value assigned to the state by the Lyapunov function under
the Markov process in question tends to decrease if the state variable begins to
diverge.  The main difficulty with the approach is funding a suitable Lyapunov
function.  The innovation introduced in this paper is to use firm output itself
as the Lyapunov function. The resulting drift condition is weak enough to allow
a very large range of specifications for incumbent productivity growth.

In addition, under the stated lifetime condition, we provide a decomposition of
the equilibrium firm size distribution and a sample path interpretation via
Pitman's occupation measure. The latter connects the cross-sectional mass of
firms in a given region of the distribution with the occupation times of
individual firms.  Using this decomposition, we prove a new formula connecting
aggregate supply (and hence aggregate demand) with the equilibrium entry rate
and the expected lifetime output of firms. 

The proof of existence of a stationary recursive equilibrium
is constructive, so quantitative tractability of the entry-exit model is
preserved.

With these results in place, we then turn to studying Pareto
tails in the firm size distribution. We analyze a setting that admits a broad
range of specifications of firm-level dynamics, including those that
follow Gibrat's law -- a commonly used baseline -- and those with the systematic
departures from Gibrat's law. (For example, small firms can grow faster than
large firms and their growth rates can exhibit greater volatility.) We prove
that when any of these firm-level dynamics are inserted into the
\cite{hopenhayn1992entry} model described above, the endogenous firm size
distribution generated by entry and exit exhibits a Pareto tail.\footnote{The results described above are valid whenever the
    deviation between incumbents' firm-level growth dynamics and Gibrat's law is
    not infinitely large, in the sense of expected absolute value.  Although
    this restriction is surprisingly weak, it tends to bind more for large firms
    than for small ones, since large firms have greater weight in the integral
    that determines expected value. This restriction is consistent with the
    data, since large firms tend to conform more to Gibrat's law than do small
    ones (see, e.g., \cite{evans1987relationship}, \cite{evans1987tests} or
    \cite{becchetti2002determinants}).}

Our results show that the Pareto tail result does not depend on the shape of the
entrants' distribution, beyond a simple moment condition, or the demand side of
the market.  In this sense, the Pareto tail becomes a highly robust prediction
of the standard entry-exit model once the state space is allowed to be
unbounded.  We also show that the tail index, which determines the amount of
mass in the right tail of the distribution and has been the source of much
empirical discussion (see, e.g., \cite{axtell2001zipf} or
\cite{gabaix2016power}), depends only on the law of motion for incumbents. As
such, it is invariant to the productivity distribution for new entrants, the
profit functions of firms, and the structure of demand. 

%
%

This paper builds on previous studies that have linked random firm-level
growth within an industry to Pareto tails in the cross-sectional distribution
of firm size.  Early examples include \cite{champernowne1953model} and
\cite{simon1955class}, who showed that Pareto tails in stationary
distributions can arise if time series follow Gibrat's law along with a
reflecting lower barrier.    Since then it has been well understood that
Gibrat's law can generate Pareto tails for the firm size distribution in
models where firm dynamics are exogenously specified.  Surveys can be found in
\cite{gabaix2008power}  and \cite{gabaix2016power}.  

%
\cite{cordoba2008generalized} points out that Gibrat's law is not supported by
the data on firm growth and considers a generalization where volatility can
depend on firm size.  He then shows that Pareto tails still arise in a
discrete state setting under such dynamics.  Our findings strengthens his result
in two ways.  First, the firm size distribution is endogenously determined as
the equilibrium outcome of an entry-exit model, allowing us to consider how
regulations, policies and demand impact on the distribution.  Second, we allow
other departures from Gibrat's law supported by the data, such as dependence
of the mean growth rate on firm size.

Like this paper, \cite{carvalho2019large} studies heavy tails in a
Hopenhayn-style entry-exit model with a large but finite number of firms. The
paper provides important insights on the connection between firm-level shocks
and aggregate productivity.  At the same time, productivity is still bounded,
like \cite{hopenhayn1992entry},  and \cite{carvalho2019large} omit
conditions under which a stationary equilibrium exists. 
Conditions on exogenous firm productivity growth and the entrants distribution
are stricter. We enhance their power law finding while
showing that the key results are invariant to the productivity distribution of
new entrants.

There are a several studies not previously mentioned that generate Pareto tails
for the firm size distribution using a number of alternative mechanisms. A
classic example is \cite{lucas1978size}, which connects heterogeneity in
managerial talent to a Pareto law.  More recent examples include
\cite{luttmer2011mechanics}, \cite{acemoglu2015innovation} and
\cite{cao2018firm}.  While important in their own right, none of these papers
provide new results on equilibria in the \cite{hopenhayn1992entry} entry-exit
model, and their techniques for generating Pareto tails are more specialized.  
Unlike this paper, none show that 
the Pareto tail is a highly robust prediction
of the basic entry-exit model.

On a technical level, this paper is somewhat related to the work of
\cite{benhabib2015wealth}, who study a nonlinear process associated with optimal
household savings that approximates a Kesten process when income is large.  This
is somewhat analogous to our treatment of the firm size distribution, in that we
allow nonlinear firm-level dynamics that approximate Gibrat's law.  However, the
topic and underlying methodology are substantially different.

The remainder of the paper is structured as follows.  Section~\ref{s:ee} sets
out the model.  Section~\ref{s:sre} shows existence of a unique stationary
recursive equilibrium when the state space is unbounded.   Section~\ref{s:ht}
investigates heavy tails and Section~\ref{s:c} concludes.  Long proofs are
deferred to the appendix.

\section{Entry and Exit}\label{s:ee}

Apart from unbounded productivity, our assumptions follow
\cite{hopenhayn1992entry}.  There is a single good produced by a
continuum of firms, consisting at each point in time of a mixture of new
entrants and incumbents.  The good is sold at price $p$ and 
the demand is given by $D(p)$.

\begin{assumption}\label{a:df}
    The demand function $D$ is continuous and strictly decreasing with
    $D(0) = \infty$ and $\lim_{p\to \infty} D(p) = 0$.
\end{assumption}

Assumption~\ref{a:df} already implies that $p=0$ cannot be an equilibrium, since
demand is infinite at that price.  This assumption is convenient but can be weakened if
necessary, since we show below that supply is zero when $p=0$.


Firms facing output price $p$ and
having firm-specific productivity $\phi$ generate profits $\pi(\phi, p)$ and
produce output $q(\phi, p)$.  
(We take $q$ and $\pi$ as given but provide examples below where they are
derived from profit maximization problems.)
Profits are negative on the boundary due to 
fixed costs, as in \cite{hopenhayn1992entry}.  In particular,

\begin{assumption}
    \label{a:pq}
    Both $\pi$ and $q$ are continuous and strictly increasing on $\RR^2_+$.
    The function $q$ is nonnegative while $\pi$ satisfies $\pi(\phi, p) < 0$ if
    either $\phi = 0$ or $p = 0$.
\end{assumption}

Productivity of each incumbent
firm updates according to the idiosyncratic Markov state process $\Gamma(\phi,
\diff \phi')$, where $\Gamma$ is a transition probability kernel on $\RR_+$.
The outside option for firms is zero and the value $v(\phi, p)$ of of an
incumbent satisfies
\begin{equation}
    \label{eq:vf}
    v(\phi, p) = \pi(\phi, p) + \beta 
        \max \left\{ 0, \int v(\phi', p) \Gamma(\phi, \diff \phi') \right\},
\end{equation}
where $\beta = 1/(1+r)$ for some fixed $r > 0$.  Here and below, integrals are over
$\RR_+$.

\begin{assumption}
    \label{a:pgamma}
    The productivity kernel $\Gamma$ is monotone increasing.  In addition,
    \begin{enumerate}
        \item For each $a > 0$ and $\phi \geq 0$, there is an $n \in \NN$ such
            that $\Gamma^n(\phi, [0, a)) > 0$.
        \item For each $p > 0$, there exists a $\phi \geq 0$ such that $\int \pi(\phi', p) \Gamma(\phi, \diff \phi') \geq 0$.
    \end{enumerate}
\end{assumption}

The symbol $\Gamma^n$ denotes the $n$-step transition kernel.  The
monotonicity assumption means that $\Gamma(\phi, [0, a])$ is decreasing in
$\phi$ for all $a \geq 0$.  Condition (a) is analogous to the recurrence
condition in \cite{hopenhayn1992entry}.  Condition (b) ensures that not all
incumbents exit every period.

New entrants draw productivity independently from a fixed probability
distribution $\gamma$ and enter the market if $\int v( \phi', p) \,
\gamma(\diff \phi') \geq c_e$, where $c_e > 0$ is a fixed cost of entry.  

\begin{assumption}
    \label{a:aper}
    The distribution $\gamma$ satisfies $\int q(\phi, p) \gamma(\diff \phi) <
    \infty$ and puts positive mass on the interval $[0, a]$ for all $a > 0$.
\end{assumption}

The first condition in Assumption~\ref{a:aper} is a regularity condition that
helps to ensure finite output.  The second condition is convenient because it
leads to aperiodicity of the endogenous productivity process.

\begin{assumption}
    \label{a:newent}
    There exists a $p > 0$ such that $\int \pi(\phi, p) \, \gamma(\diff \phi)
    \geq c_e$.
\end{assumption}

Assumption~\ref{a:newent} ensures that entry occurs when the price is
sufficiently large. It is relatively trivial because, for price taking firms,
revenue is proportional to price.  

For realistic industry dynamics, we also need a
nonzero rate of exit.  We implement this by assuming that,
when a firm's current productivity is sufficiently low, its expected
lifetime profits are negative:  

\begin{assumption}
    \label{a:npro}
    The profit function obeys
    $\sum_{t \geq 0} \beta^t \int \pi(\phi', p) \Gamma^t (0, \diff \phi') \leq 0$.
\end{assumption}

Assumption~\ref{a:npro} clearly holds if $\pi$ is bounded.
Another setting where Assumption~\ref{a:npro} holds 
is when firm growth follows Gibrat's law,
so that $\Gamma$ is represented by the recursion $\phi_{t+1} = A_{t+1} \phi_t$
for some positive {\sc iid} sequence $\{A_t\}$.  Then $\phi_0=0$ implies
$\phi_t=0$ for all $t$, and hence $\Gamma^t (0, \diff \phi')$ is a point mass
at zero.  Hence the integral in Assumption~\ref{a:npro} evaluates to $\pi(0,
p)$ for each $t$, which is negative by Assumption~\ref{a:pq}.

Since productivity is unbounded and profits can be arbitrarily large, we
also need a condition on the primitives to ensure that $v$ is finite.  
In stating it, we consider the productivity process $\{\phi_t\}$ defined by
\begin{equation}
    \label{eq:inff}
        \phi_0 \sim \gamma \text{ and }
        \phi_{t+1} \sim \Gamma(\phi_t, \diff \phi')
        \text{ when } t \geq 1. 
\end{equation}

\begin{assumption}
    \label{a:fc}
    There is a $\delta \in (\beta, 1)$ with $\sum_{t \geq 0} \,
    \delta^t \, \EE \, \pi (\phi_t, p) < \infty$ at all $p \geq 0$.
\end{assumption}

While slightly stricter than a direct bound on lifetime profits,
Assumption~\ref{a:fc} has the benefit of
yielding a contraction result for the Bellman operator corresponding to the
Bellman equation \eqref{eq:vf}.  
Since we are working in a setting where profits can be arbitrarily large, 
the value function is unbounded, so 
the contraction in question must be with respect to a \emph{weighted} supremum
norm.  To construct this norm, we take $\delta$ as in Assumption~\ref{a:fc} and let
\begin{equation}
    \label{eq:dkappa}
    \kappa(\phi, p) := \sum_{t \geq 0} 
        \delta^t \EE_\phi \hat \pi (\phi_t, p)
        \; \text{ with }  \hat \pi := \pi + b.
\end{equation}
Here $b$ is a constant chosen such that $\pi + b \geq 1$.  The function
$\kappa$ is constructed so that it dominates the value function and satisfies
$1 \leq \kappa < \infty$ at all points in the state space.\footnote{To be more
    precise, $\phi \mapsto \kappa(\phi, p)$ is finite
    $\gamma$-almost everywhere by Assumption~\ref{a:fc}.  If $\gamma$ is
    supported on all of $\RR_+$, then, since the function in
question is monotone, this implies that $\kappa$ is finite everywhere.  If
not, then we tighten the assumptions above by requiring that $\kappa(\phi, p)$
is finite everywhere.} For each scalar-valued $f$ on $\RR_+^2$, let $\|f \|_\kappa
:= \sup |f/\kappa|$.  This is the $\kappa$-weighted supremum
norm.  If it is finite for $f$ then we say that $f$
is $\kappa$-bounded.  Let 
\begin{equation*}
    \cf :=
    \text{all continuous, increasing and $\kappa$-bounded functions on
    $\RR_+^2$}.     
\end{equation*}
Under the distance
$d(v, w) := \| w - v \|_\kappa$, the set $\cf$ is a complete metric
space.\footnote{Completeness of the set of continuous $\kappa$-bounded
    functions under $d$ is proved in many places, including
    \cite{hernandez2012further}, \S7.2.  Our claim of completeness of $(\cf,
    d)$ follows from the fact that the limit of a sequence of increasing
    functions in $(\cf, d)$ is also increasing.}


\begin{assumption}
    \label{a:gcm}
    If $u$ is in $\cf$, then $(\phi, p) \mapsto \int u(\phi', p) 
    \Gamma(\phi, \diff \phi')$ is continuous.
\end{assumption}

Assumption~\ref{a:gcm} is a version of the continuity property imposed by
\cite{hopenhayn1992entry}, modified slightly to accommodate the fact that
$\Gamma$ acts on unbounded functions.

\section{Stationary Recursive Equilibrium}

\label{s:sre}

Now we turn to existence, uniqueness and computation of stationary recursive
equilibria for the industry.  All assumptions from the previous section are
in force.

\subsection{Preliminary Results}

We begin our analysis with the firm decision problem.  The next lemma determines
the firm value function $\bar v$, where $\bar v(\phi, p)$ is lifetime value of
the firm given current productivity $\phi$ and price $p$.

\begin{lemma}
    \label{l:tcm}
    The Bellman operator $T \colon \cf \to \cf$ defined at $v \in \cf$ by
    \begin{equation}
        \label{eq:bellop}
        (Tv)(\phi, p) = \pi(\phi, p) + \beta 
            \max \left\{ 0, \int v(\phi', p) \Gamma(\phi, \diff \phi') \right\}
    \end{equation}
    is a contraction of modulus $\beta$ on the metric space $(\cf, d)$. Its 
    unique fixed point 
        $\bar v$ is strictly increasing and 
        $\bar v(\phi, p) < 0$ if either $\phi = 0$ or $p = 0$.
\end{lemma}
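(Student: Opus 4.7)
The plan is to apply Banach's fixed point theorem to $T$ on the complete metric space $(\cf, d)$ and then read off the monotonicity and sign properties directly from $v^* = T v^*$. Everything keys off a one-step drift identity for $\kappa$: unrolling the defining series in \eqref{eq:dkappa} yields $\kappa(\phi, p) = \hat\pi(\phi, p) + \delta \int \kappa(\phi', p) \, \Gamma(\phi, \diff \phi')$, so in particular $\int \kappa(\phi', p) \, \Gamma(\phi, \diff \phi') \leq \delta^{-1} \kappa(\phi, p)$, since $\hat\pi \geq 1$.

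From this I would first verify $T(\cf) \subset \cf$. Continuity of $Tv$ follows from continuity of $\pi$, Assumption~\ref{a:gcm} applied to the integral term, and continuity of $\max\{0, \cdot\}$. Monotonicity follows from strict monotonicity of $\pi$, stochastic monotonicity of $\Gamma$ applied to the increasing $v$, and preservation of monotonicity under the truncation. The $\kappa$-boundedness is a consequence of the drift inequality and $\left| \int v \, \Gamma(\phi, \diff \phi') \right| \leq \|v\|_\kappa \int \kappa(\phi', p) \, \Gamma(\phi, \diff \phi')$. For the contraction, the Lipschitz bound $|\max\{0,a\} - \max\{0,b\}| \leq |a-b|$ yields
\[
|(Tv)(\phi, p) - (Tw)(\phi, p)| \leq \beta \int |v-w|(\phi', p) \, \Gamma(\phi, \diff \phi') \leq (\beta/\delta) \, \|v - w\|_\kappa \, \kappa(\phi, p),
\]
so $d(Tv, Tw) \leq (\beta/\delta) \, d(v, w)$, with contraction modulus $\beta/\delta < 1$; Banach then delivers a unique fixed point $v^* \in \cf$.

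The remaining claims come from $v^* = T v^*$. Strict monotonicity: $\pi$ is strictly increasing in both arguments, while the $\beta \max\{0, \int v^* \, \Gamma\}$ term is at worst weakly increasing, so the sum is strictly increasing. Strict negativity at $p = 0$: iterate $v_n = T^n 0 \to v^*$ in $(\cf, d)$ and show by induction that $v_n(\cdot, 0) \leq 0$ for each $n$; the continuation integral is then non-positive, the $\max$ collapses to $0$, and $v_n(\phi, 0) = \pi(\phi, 0) < 0$; passing to the limit and applying $T$ once more gives $v^*(\phi, 0) = \pi(\phi, 0) < 0$. Strict negativity at $\phi = 0$ would use Assumption~\ref{a:npro}: unrolling the Bellman equation produces the optimal-stopping representation $v^*(0, p) = \sup_\tau \EE_0 \sum_{t=0}^\tau \beta^t \pi(\phi_t, p)$, in which every admissible $\tau$ contributes the strictly negative period-zero profit $\pi(0, p)$, while Assumption~\ref{a:npro} controls the $t \geq 1$ tail.

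The main obstacle I anticipate is the $\phi = 0$ boundary: Assumption~\ref{a:npro} bounds only the infinite-horizon sum, not an arbitrary finite stopping time, so ruling out a strategically chosen finite $\tau$ that yields non-negative value will require combining monotone convergence of the $v_n$ iterates with Assumption~\ref{a:fc}-type integrability to propagate the sign through to the limit.
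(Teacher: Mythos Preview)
Your contraction argument and the check that $T$ preserves $\cf$ are essentially the paper's: both rest on the drift bound $\Gamma\kappa \leq \delta^{-1}\kappa$ (which the paper records as a separate lemma) and yield the modulus $\beta/\delta$. The strict-monotonicity argument is likewise identical.

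The boundary sign claims are where you and the paper diverge. For $p=0$ the paper introduces $h(\phi,p):=\sum_{t\geq 1}\beta^t\,\EE_\phi|\pi(\phi_t,p)|$, records the upper bound $v^*\leq\pi+h$, and then asserts ``$h(\phi,0)\leq 0$ by Assumption~\ref{a:pq}''---which cannot be right as written, since $h\geq 0$ by construction. Your iteration $v_n=T^n0$ is correct and cleaner: once $v^*(\cdot,0)\leq 0$ is obtained in the limit, one more application of $T$ collapses the continuation term to zero and gives $v^*(\cdot,0)=\pi(\cdot,0)<0$ exactly as you describe.

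For $\phi=0$ the paper does \emph{not} pass through an optimal-stopping representation; it simply says first-period profits are negative and ``subsequent lifetime profits are nonpositive by Assumption~\ref{a:npro}'', then concludes. This is precisely the step you flag as delicate, and your worry is justified: Assumption~\ref{a:npro} bounds only the never-exit value $\sum_{t\geq 0}\beta^t\,\EE_0\pi(\phi_t,p)$, which is a \emph{lower} bound on $v^*(0,p)$, not an upper one. So the paper's one-line treatment shares the gap you identify rather than closing it. Your more elaborate plan via stopping times and monotone iterates is a reasonable direction, but it is not what the paper does, and you would still need an ingredient beyond Assumption~\ref{a:npro} alone (presumably the monotonicity of $\Gamma$ together with the threshold structure of the optimal policy) to rule out a finite $\tau$ with nonnegative expected payoff.
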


Given $\bar v$, we let $\bar \phi$ be the \emph{exit threshold} function defined by
\begin{equation}
    \label{eq:barphi}
    \bar \phi(p) := 
    \min
    \left\{ 
        \phi \geq 0  
        \;\; \big| \;
        \int \bar v(\phi', p) \, \Gamma(\phi, \diff \phi') \geq 0
    \right\}.
\end{equation}
With the convention that incumbents who are indifferent 
remain rather than exit, an incumbent with productivity $\phi$ exits
if and only if $\phi < \bar \phi (p)$.
In \eqref{eq:barphi} we take the usual convention that $\min
\varnothing = \infty$.  

\begin{lemma}
    \label{l:barphi}
    $\bar \phi$ is finite, strictly positive and decreasing on $(0, \infty)$
    with $\bar \phi(0) = \infty$.
\end{lemma}

\subsection{Definitions}\label{ss:sredef}

Let $\bB$ be the Borel subsets of $\RR_+$ and $\mM$ be all measures
on $\bB$.  Taking $\bar v$ and $\bar \phi$ as defined in the previous
section, a \emph{stationary recursive equilibrium}  is a triple 
\begin{equation*}
    (p, M, \mu) \text{ in } 
    \eE \coloneq (0, \infty) \times (0, \infty) \times \mM,
\end{equation*}
with $p$ understood as price, $M$ as mass of entrants, and $\mu$ as a
distribution of firms over productivity levels, such that the goods market clears:
\begin{equation}
    \label{eq:goods}
    \int q( \phi, p) \mu(\diff \phi) = D(p),
\end{equation}
the \emph{invariance condition}
\begin{equation}\label{eq:invar}
    \mu (B) 
    = \int \Gamma(\phi, B) \, \1\{\phi \geq \bar \phi(p)\} \,  \mu(\diff \phi)
    + M \, \gamma(B) 
    \text{ for all } B \in \bB,
\end{equation}
holds, the \emph{equilibrium entry condition} 
\begin{equation}\label{eq:entc}
    \int \bar v( \phi, p) \, \gamma(\diff \phi) = c_e
\end{equation}
holds, and the \emph{balanced entry and exit condition}
\begin{equation}\label{eq:beec}
    M = \mu \{ \phi < \bar \phi(p) \}
\end{equation}
is verified.

\subsection{Existence and Uniqueness}

Throughout this section, we take $\{\phi_t\}$ as in \eqref{eq:inff} and set
\begin{equation}\label{eq:taustar}
    \tau(p) := \inf \setntn{t \geq 1}{\phi_t < \bar \phi(p)}.
\end{equation}
The random variable $\tau(p)$ records \emph{firm lifespan} associated with 
productivity path $\{\phi_t\}$ when output price is $p$.
\emph{Lifetime firm output} is the random variable 
\begin{equation}\label{eq:ellstar}
    \ell(p) := \sum_{t=1}^{\tau(p)} q(\phi_t, p).
\end{equation}

The first result of this section provides a candidate equilibrium price, which
equates the expected value of entry to its cost.

\begin{lemma}
    \label{l:pphis}
    There exists a unique $p > 0$ such that $\int \bar v( \phi, p) \, \gamma(\diff \phi) = c_e$.
\end{lemma}

In what follows we let 
\begin{equation}\label{eq:eqenp}
    p^* \coloneq
    \text{ the unique positive price in Lemma~\ref{l:pphis} }
\end{equation}
and call it the \emph{equilibrium entry price}.

We can now state our main existence and uniqueness result, which  characterizes
equilibrium for the entry-exit model set out in Section~\ref{s:ee}. All
assumptions from that section in force.

\begin{theorem}
    \label{t:bk1}
    The following statements are equivalent: 
    \begin{enumerate}
        \item[\rm{(a)}] $\EE \ell(p^*) < \infty$.
        \item[\rm{(b)}] There exists an $M^* \in (0, \infty)$ and $\mu^*$ in
            $\mM$ such that  $(p^*, M^*, \mu^*)$ is a stationary recursive
            equilibrium.
    \end{enumerate}
    If either and hence both of these statements are true, then 
    \begin{enumerate}
        \item[\rm{(i)}] $(p^*, M^*, \mu^*)$ is the only stationary recursive
            equilibrium in $\eE$,
        \item[\rm{(ii)}] equilibrium expected firm lifespan $\EE \, \tau(p^*)$ is finite, 
        \item[\rm{(iii)}] the equilibrium $(p^*, M^*, \mu^*)$ obeys
            \begin{equation}
                \label{eq:kdec}
                \mu^* (B) = M^*  \cdot 
                \EE \sum_{t = 1}^{\tau(p^*)} \1\{\phi_t \in B\}
                    \quad \text{for all } B \in \bB,
            \end{equation}
        \item[\rm{(iv)}] and aggregate supply obeys
            \begin{equation}\label{eq:ase}
                \int q(\phi, p^*) \mu^* (\diff \phi) 
                    =  M^* \EE \, \ell(p^*).
            \end{equation}
    \end{enumerate}
\end{theorem}

The decomposition \eqref{eq:kdec} ties the cross-sectional distribution of
productivity to dynamics at the level of the firm. It says that the mass of
firms in set $B$ is proportional to the expected number of times that a firm's
productivity visits $B$ over its lifespan. The decomposition is obtained by a
combination of Kac's Theorem and the Pitman occupation formula. One simple special case
of \eqref{eq:kdec} is when $B = \RR_+$, which yields
\begin{equation*}
    \frac{M^*}{\mu^*(\RR_+)} = \frac{1}{\EE \, \tau(p^*)}.
\end{equation*}
Thus, in equilibrium, the entry rate equals the
reciprocal of the expected lifespan of firms.

The result in \eqref{eq:ase} provides a simple formula connecting
aggregate supply (and aggregate demand) with the equilibrium entry rate and the
expected lifetime output of firms.  The formula may be used for calibration or
testing in quantitative work.

One special case of Theorem~\ref{t:bk1} is when productivity is bounded above,
as in \cite{hopenhayn1992entry}.  To see this, suppose $\phi_t$ is bounded above
by $B$ and let $\mathbf a^* \coloneq [0, \bar \phi(p^*))$.  By Assumption~\ref{a:pgamma} there exists an integer $n$
such that $\epsilon := \Gamma^n (B, \mathbf a^*) > 0$.  Because the
process $\{\phi_t\}$ renews whenever it visits $\mathbf a^*$, regenerating with
a fresh draw from the entry distribution $\gamma$, the process
$\{\phi_t\}$ falls below $\bar \phi(p^*)$ with independent probability at least
$\epsilon$ every $n$ periods.  As a result, 
\begin{equation*}
    \EE \, \tau(p^*) 
    = \sum_{m \in \NN} \PP\{\tau(p^*) \geq m\} 
    \leq \sum_{m \in \NN} (1-\epsilon)^{\lfloor m/n \rfloor}  
    < \infty.
\end{equation*}
Since $q(\phi_t, p^*) \leq \bar q \coloneq q(B, p^*)$, this implies
$\EE \, \ell(p^*) \leq \bar q \EE \, \tau(p^*) < \infty$.
In particular, bounded productivity implies (a) in Theorem~\ref{t:bk1},
and hence (b) and (i)--(iv).

\subsection{A Sufficient Condition}\label{ss:ascond}

In the preceding paragraph we gave a strict sufficient condition for the
conclusions of Theorem~\ref{t:bk1} to hold. In this section, we provide a more
general sufficient condition based around the idea of using output as a Lyapunov
function. The condition depends only on primitives.

\begin{assumption}
    \label{a:dc}
    For each $p > 0$, there exists an $\lambda \in (0, 1)$ and 
    $L < \infty$ such that
    \begin{equation}
        \label{eq:dc}
        \int q(\phi', p) \Gamma(\phi, \diff \phi')
        \leq \lambda q(\phi, p) + L
        \quad
        \text{ for all } \phi \geq 0.
    \end{equation}
\end{assumption}

Assumption~\ref{a:dc} says that output growth for incumbents is expected to be
negative whenever current output is sufficiently large.\footnote{To see this,
    we can write $q(\phi_t, p)$ as $Q_t$ and express \eqref{eq:dc} as $\ln(
\EE_t Q_{t+1} / Q_t)\leq \ln (\lambda + L / Q_t)$.  When $Q_t$ is sufficiently
large, the right-hand side is negative.} In the literature on Markov
processes, the bound in \eqref{eq:dc} is sometimes called a Foster--Lyapunov
drift condition.  The key idea in Assumption~\ref{a:dc} is that the output
function $q$ can adopted as the Lyapunov function in the drift condition.

Let the assumptions in Section~\ref{s:ee} hold.  The following result shows that
the drift condition in \eqref{eq:dc} is sufficient for finite expected firm
lifetimes, and hence, by Theorem~\ref{t:bk1}, for existence and uniqueness of a
stationary recursive equilibrium.

\begin{proposition}
    \label{p:geoerg}
    If Assumption~\ref{a:dc} holds, then $\EE \, \ell(p^*) < \infty$.
\end{proposition}

The intuition behind Proposition~\ref{p:geoerg} is as follows.  When
Assumption~\ref{a:dc} is in force, incumbents with sufficiently large output
tend to see output fall in the next period.  Output is a strictly
increasing function of $\phi$, so falling output means falling productivity.
From this one can construct a finite interval such that, for any given
incumbent, productivity returns to this interval infinitely often.  At each
such occasion, the recurrence condition in Assumption~\ref{a:pgamma} yields an
independent $\epsilon$ probability of exiting.  Eventually the firm exits and
lifetime output remains finite.\footnote{Even if firm lifespan is finite along
    every sample path, this does not necessarily imply that the expectation $\EE
    \tau(p^*)$ is finite.  Hence there are some subtleties involved in the proof
    of Proposition~\ref{p:geoerg}.  The reason that output is used as the
Lyapunov function is that we need this expectation to be finite. The appendix
gives details.}

\begin{example}
    \label{eg:gll}
    Suppose that incumbent productivity grows according to 
    \begin{equation}
        \phi_{t+1} = A_{t+1} \phi_t + Y_{t+1} 
        \quad \text{for some {\sc iid} sequence } \{A_t, Y_t\},
    \end{equation}
    that production is linear in $\phi$ and that all factors of production
    are constant, so that $q(\phi, p) = e \phi$ for some $e > 0$. Regarding
    the drift condition \eqref{eq:dc}, we have 
    \begin{equation*}
        \int q(\phi', p) \Gamma(\phi, \diff \phi')
        = e \EE A_{t+1} \phi + \EE Y_{t+1}
        = \EE A_{t+1} q(\phi, p) + \EE Y_{t+1}.
    \end{equation*}
    Assumption~\ref{a:dc} is therefore satisfied whenever $\EE A_t < 1$ and $\EE
    Y_t
    < \infty$. 
\end{example}

\begin{example}
    \label{eg:glcd}
    Suppose instead that production is Cobb--Douglas, 
    with output $\phi n^\theta$ under labor input $n$ and parameter $\theta \in (0,
    1)$.  With profits given by $p \phi n^\theta - c - w n$
    for some $c, w > 0$, the function for output at optimal labor
    input is 
    \begin{equation*}
         q(\phi, p) = \phi^\eta m(p)   
         \quad \text{where } 
         \eta := \frac{1}{1-\theta}  
         \; \text{ and } \;
         m(p) := \left( \frac{p \theta}{w} \right)^{\theta/(1-\theta)}.
    \end{equation*}
    If productivity growth follows $\phi_{t+1} = A_{t+1}
    \phi_t$, then the right-hand side of the drift condition \eqref{eq:dc} becomes
    \begin{equation*}
        \int q(\phi', p) \Gamma(\phi, \diff \phi')
        = \EE  (A_{t+1} \phi)^\eta \, m(p)
        =  \EE  A_{t+1}^\eta  q(\phi, p).
    \end{equation*}
    Thus, Assumption~\ref{a:dc} is valid whenever 
    $\EE[ A_t^\eta ] < 1$.  If, say, $A_t$ is lognormal with $A_t =
    \exp(m + \sigma Z)$ for $Z \sim N(0,1)$, then
    $\EE[ A_t^\eta ] = \EE \exp(\eta m + \eta \sigma Z) = \exp(\eta m + (\eta
    \sigma)^2 / 2)$ and the condition becomes
    \begin{equation*}
        m + \frac{1}{1-\theta} \frac{\sigma^2}{2} < 0.
    \end{equation*}
    This joint restriction on the rate
    of incumbent firm growth and the Cobb--Douglas production parameter $\theta$
    is sufficient for Assumption~\ref{a:dc} and hence finite firm lifetimes.
\end{example}

\subsection{Computing the Solution}\label{ss:comp}

For the purposes of this section, we insert balanced entry and
exit into the time invariance condition, yielding
\begin{equation}
    \label{eq:dlom}
    \mu_p (B) 
    = \int \Pi_p(\phi, B) \mu_p (\diff \phi)
    \quad \text{for all } B \in \bB,
\end{equation}
where $\Pi_p$ is the transition kernel on $\RR_+$ defined by
\begin{equation}
    \label{eq:upm}
    \Pi_p(\phi, B) 
    = \Gamma(\phi, B) \1\{\phi \geq \bar \phi(p) \} + \1\{\phi < \bar \phi(p)
    \} \gamma(B).
\end{equation}
In the appendix we show that there exists a unique $\mu_p$ satisfying
\eqref{eq:dlom} whenever firms have finite expected lifespan.

Let $\pP$ be the Borel probability measures on $\RR_+$.
Under the finite expected lifetime condition from Theorem~\ref{t:bk1}, the
unique stationary equilibrium can be computed as follows:
\begin{enumerate}
    \item[(S1)] Obtain $\bar v$ as the unique fixed point of $T$ in $\cf$ and
        $\bar \phi$ as in \eqref{eq:barphi}.
    \item[(S2)] Solve for the equilibrium entry price $p^*$, as in 
        Lemma~\ref{l:pphis}. 
    \item[(S3)] Define $\Pi_{p^*}$ via \eqref{eq:upm} and compute $\mu_{p^*}$
        as the unique solution to \eqref{eq:dlom} in $\pP$. 
    \item[(S4)] Rescale $\mu_{p^*}$ by setting $s := D(p^*)/ \int q( \phi, p^*)
        \mu_{p^*}(\diff \phi)$ and then $\mu^* := s \, \mu_{p^*}$.
    \item[(S5)] Obtain the mass of entrants via $M^* = \mu^* \{ \phi < \bar \phi(p^*) \}$.
\end{enumerate}

The proof of Theorem~\ref{t:bk1} in the appendix confirms that the triple $(p^*,
M^*, \mu^*)$ computed via (S1)--(S5) is a stationary recursive equilibrium.

Regarding (S1), $\bar v$ is a fixed point of a contraction map, as shown in
Lemma~\ref{l:tcm}.  This provides the basis of a globally convergent method of
computation. The value $p^*$ in (S2) can be obtained once we solve for $\bar v$
and $\bar \phi$. 

Uniqueness in (S3) always holds because $\Pi_p$ in \eqref{eq:upm} is
$\gamma$-irreducible (see \cite{meyn2012markov} for definitions) whenever $p >
0$. The condition $\EE \tau(p^*) < \infty$ then implies that the same
process is Harris recurrent and ergodic, opening avenues for computing
$\mu_{p^*}$ through either simulation or successive
approximations.\footnote{Stronger statements are true when Assumption~\ref{a:dc}
    holds.  We show in the proof of Proposition~\ref{p:geoerg} that when
    Assumption~\ref{a:dc} is in force, the transition kernel $\Pi_p$ is
    $V$-uniformly ergodic \cite[Chapter~16]{meyn2012markov} for all $p > 0$,
    implying that the marginal distributions generated by $\Pi_{p}$ converge to
    its unique stationary distribution at a geometric rate and yielding a range
    of sample path properties.}

Rescaling in (S4) is implemented so that the goods market clears.  

One nontrivial issue with the computation in (S3) is that, as shown in the
next section, the productivity distribution $\mu_{p^*}$ and hence the firm size
distribution $\mu^*$ have very heavy tails for under realistic firm-level growth
dynamics.  This complicates numerics.  Methods for handling fat tails
numerically have been proposed by \cite{gouin2019pareto} in the context of the
wealth distribution and similar ideas should be applicable here.

Figure~\ref{f:fds_hist} shows a histogram of the normalized firm size distribution
generated by the model in the setting of Example~\ref{eg:glcd}, with the
equilibrium computed according to (S1)--(S5).  Firm size is measured by log
output and the entry distribution is also assumed to be lognormal.
While the distribution looks lognormal at first approximation, the right hand tail is too
heavy.  In fact the distribution strongly exhibits the characteristics of a
Pareto tail, as shown by the rank-size plot in Figure~\ref{f:fds_pareto} (which
uses the same data).  In the next section we prove that the distribution is
Pareto-tailed under this specification (which matches Gibrat's law) and a large
range of alternative specifications.

\begin{figure}
	\centering
	\scalebox{0.8}{\includegraphics{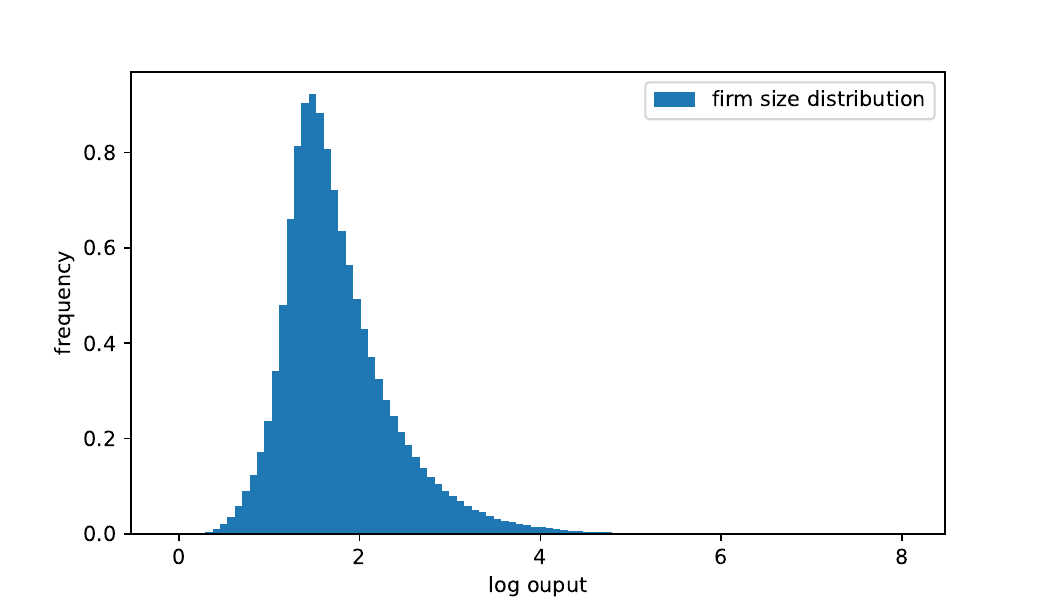}}
	\caption{\label{f:fds_hist}Histogram of the simulated log firm size distribution} 
\end{figure}

\begin{figure}
	\centering
	\scalebox{0.8}{\includegraphics{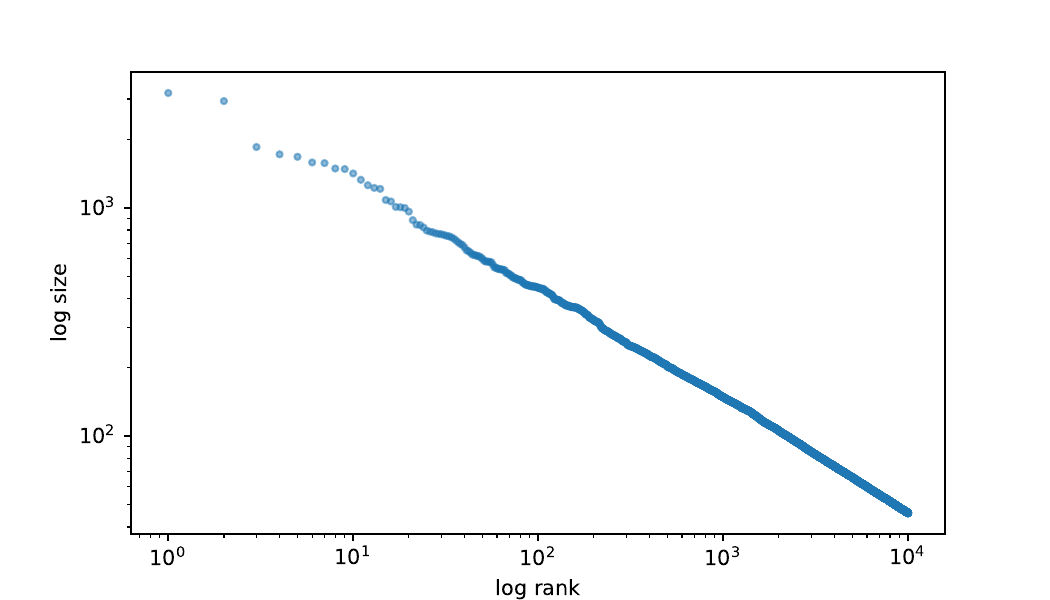}}
	\caption{\label{f:fds_pareto}Rank-size plot of the simulated firm size distribution} 
\end{figure}

\section{Pareto Tails}

\label{s:ht}

Next we turn to the tail properties of the equilibrium distribution of firms
identified by Theorem~\ref{t:bk1}. To be certain that this distribution
exists, we impose the conditions of Proposition~\ref{p:geoerg}.    While we
focus on productivity when analyzing firm size, heavy tails in productivity is
typically mirrored or accentuated in profit-maximizing output.\footnote{For
example, in the Cobb--Douglas case studied in Example~\ref{eg:glcd},
profit-maximizing output is convex in productivity.}

It is convenient to introduce a function $G$ and an {\sc iid} sequence
$\{W_t\}$ such that 
\begin{equation}
    \label{eq:defgw}
    \phi_{t+1} = G(\phi_t, W_{t+1})
\end{equation}
obeys the incumbent dynamics
embodied in the Markov kernel $\Gamma$.\footnote{In other words, $\PP\{
G(\phi, W_{t+1}) \in B \} = \Gamma(\phi, B)$  for all $\phi \geq 0, \; B \in
\bB$.} Such a representation can always be constructed (see, e.g.,
\cite{bhattacharya2007random}).  
Let $X$ be a random variable with distribution $\mu_{p^*}$, where $\mu_{p^*}$ is the
unique probability measure obeying \eqref{eq:dlom} at the equilibrium entry
price $p^*$.
The firm size distribution\footnote{In referring to this distribution, 
    we ignore the distinction between
    the probability distribution $\mu_{p^*}$, from which $X$ is drawn, and the
equilibrium firm size distribution $\mu^*$, since one is a rescaled version of
the other and hence the tail properties are unchanged.} has a Pareto tail with
tail index $\alpha > 0$ if there exists a $C > 0$ with
\begin{equation}
    \label{eq:dpt}
    \lim_{x \to \infty} x^{\alpha} \, \PP \{X > x \} = C.
\end{equation}
In other words, the distribution is such that $\PP \{X > x \}$ goes to zero
like $x^{-\alpha}$.  To investigate when $X$ has this property, we impose the
following restriction on the law of motion for incumbent firms.    
In stating it, we take $W$ as a random variable with the same
distribution as each $W_t$.

\begin{assumption}
    \label{a:gc}
    There exists an $\alpha > 0$ and an independent random
    variable $A$ with continuous distribution function such that $\EE
    A^{\alpha} = 1$, the moments $\EE A^{\alpha+1}$ and $\int z^\alpha
    \gamma(\diff z)$ are both finite, and
    \begin{equation}
        \label{eq:gc}
        \EE \left| G(X, W)^\alpha - (A X)^\alpha \right| < \infty.
    \end{equation}
\end{assumption}

Condition \eqref{eq:gc} bounds the deviation between the law of motion
\eqref{eq:defgw} for incumbent productivity and Gibrat's law, which is where
productivity updates via $\phi_{t+1} = A_{t+1} \phi_t$.
The existence of a positive $\alpha$ such that $\EE A^{\alpha} = 1$
requires that $A$ puts at least some
probability mass above 1.  In terms of Gibrat's law $\phi_{t+1} = A_{t+1}
\phi_t$, this corresponds to the natural assumption that incumbent firms
grow with positive probability.  

\begin{theorem}
    \label{t:bk2}
    If Assumption~\ref{a:gc} holds for some $\alpha > 0$, 
    then the endogenous stationary distribution for firm productivity is
    Pareto-tailed, with tail index equal to $\alpha$.
\end{theorem}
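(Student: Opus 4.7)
The plan is to apply the implicit renewal theorem of \cite{goldie1991implicit}. My starting point is a stochastic fixed-point equation for $X \sim \mu$ extracted from the invariance condition \eqref{eq:dlom}. Since $\mu$ is $\Pi_{p^*}$-invariant with $\Pi_{p^*}$ given by \eqref{eq:upm}, a direct computation yields
\begin{equation*}
X \stackrel{d}{=} G(X, W)\, \1\{X \geq \bar\phi(p^*)\} + Z\, \1\{X < \bar\phi(p^*)\},
\end{equation*}
where $W$ realises the incumbent noise in \eqref{eq:defgw}, $Z \sim \gamma$ is the entrant draw, and $(W, Z)$ is independent of $X$. Setting $Y := X^\alpha$, raising both sides to the $\alpha$-th power, and using disjointness of the two indicator events gives
\begin{equation*}
Y \stackrel{d}{=} A^\alpha Y\, \1\{X \geq \bar\phi(p^*)\} + Q,
\end{equation*}
where $A$ is the comparison variable from Assumption~\ref{a:gc}, taken independent of $(X, W, Z)$, and
\begin{equation*}
Q := \bigl[G(X, W)^\alpha - (AX)^\alpha\bigr]\, \1\{X \geq \bar\phi(p^*)\} + Z^\alpha\, \1\{X < \bar\phi(p^*)\}.
\end{equation*}
This is a Kesten-type perpetuity equation with a controlled perturbation $Q$.

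With this equation in hand I would invoke Goldie's main tail theorem. Its hypotheses can be verified directly: (i) the Cram\'er condition $\EE A^\alpha = 1$ is part of Assumption~\ref{a:gc}; (ii) $\log A$ is non-arithmetic because $A$ has a continuous distribution function; (iii) $\EE A^{\alpha+1} < \infty$ is also given, controlling the renewal measure at the critical exponent; and (iv) $Q$ is integrable, since $\EE|G(X, W)^\alpha - (AX)^\alpha| < \infty$ by \eqref{eq:gc} and $\EE Z^\alpha = \int z^\alpha\, \gamma(\diff z) < \infty$. Goldie's theorem then produces a constant $C \geq 0$ with $\lim_{x \to \infty} x^\alpha\, \PP\{X > x\} = C$. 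Positivity of $C$ is extracted from Goldie's explicit formula for the limit: the entrant branch $Z^\alpha\, \1\{X < \bar\phi(p^*)\}$ injects strictly positive mass into $Q$, thanks to Assumption~\ref{a:aper} combined with the finiteness of $\int z^\alpha\, \gamma(\diff z)$, and this prevents the tail constant from collapsing to zero.

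The main obstacle is that the multiplicative factor in the derived equation, namely $A^\alpha\, \1\{X \geq \bar\phi(p^*)\}$, is not independent of $Y$, whereas the textbook Kesten perpetuity $Y \stackrel{d}{=} MY + Q$ requires $(M,Q)$ independent of $Y$. The saving grace is that the indicator can only fail on the bounded region $\{X < \bar\phi(p^*)\}$, where $Y \leq \bar\phi(p^*)^\alpha$ is uniformly bounded; the mismatch is therefore absorbable into the additive perturbation at the cost only of enlarging an already integrable term. Concretely, the argument invokes Goldie's theorem in the generality that admits a Lipschitz-approximate random map—which is precisely the setting of \cite{goldie1991implicit}—rather than its narrower independent-$(M, Q)$ specialisation. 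Once this subtlety is dispatched, what remains is routine verification of the moment and non-arithmeticity conditions, all of which are encoded in Assumption~\ref{a:gc}.
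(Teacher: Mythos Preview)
Your proposal is correct and follows essentially the same route as the paper's own proof: both apply Goldie's implicit renewal theorem (Corollary~2.4 of \cite{goldie1991implicit}) by verifying $\EE\,|H(X,W,Z)^\alpha - (AX)^\alpha| < \infty$, where $H$ encodes the full equilibrium dynamics induced by $\Pi_{p^*}$, splitting the expectation according to whether $X < \bar\phi(p^*)$ (where boundedness of $X$ plus $\int z^\alpha\,\gamma(\diff z)<\infty$ suffices) or $X \geq \bar\phi(p^*)$ (where \eqref{eq:gc} applies directly). Your detour through the perpetuity form $Y \stackrel{d}{=} A^\alpha Y\,\1\{\cdot\} + Q$ and the independence worry is unnecessary---the paper goes straight to Goldie's general random-map formulation---but you correctly identify that formulation as the resolution, so the substance is identical.
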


While Assumption~\ref{a:gc} involves $X$, which is endogenous, we can obtain
it from various sufficient conditions that involve only primitives.
For example, suppose there exist independent nonnegative random
variables $A$ and $Y$ such that
\begin{enumerate}
    \item[(P1)] $Y$ has finite moments of all orders,
    \item[(P2)] $A$ satisfies the conditions in
        Assumption~\ref{a:gc} for some $\alpha \in (0, 2)$, and
    \item[(P3)] the bound $|G(\phi, W) - A \phi| \leq Y$ holds for all $\phi \geq 0$. 
\end{enumerate}
We also assume that the first moment of $\gamma$ is finite,  although this is
almost always implied by Assumption~\ref{a:aper} (see, e.g.,
Examples~\ref{eg:gll}--\ref{eg:glcd}).

Condition (P3) provides a connection between incumbent dynamics and Gibrat's
law. Note that the dynamics in $G$ can be nonlinear and, since $Y$ is allowed to be
unbounded, infinitely large deviations from Gibrat's law are permitted.
One simple specification satisfying (P3) is when $G(\phi, W) = A \phi + Y$,
which already replicates some empirically relevant properties (e.g., small
firms exhibit more volatile and faster growth rates than large ones).

Conditions (P1)--(P3) only restrict incumbent dynamics 
(encapsulated by $\Gamma$ in the notation of
Sections~\ref{s:ee}--\ref{s:sre}).  Since, in Theorem~\ref{t:bk2}, the tail
index is determined by $\alpha$, these dynamics are the only primitive
that influences the index on the Pareto tail.  The range of values for
$\alpha$ in (P2) covers standard estimates (see, e.g., \cite{gabaix2016power}).

To show that (P1)--(P3) imply the conditions of Assumption~\ref{a:gc}, we proceed
as follows.  As $A$ satisfies the conditions of Assumption~\ref{a:gc}, we only
need to check that \eqref{eq:gc} holds.  In doing so, we will make use of the
elementary bound 
\begin{equation}
    \label{eq:eb}
    |x^a - y^a|
    \leq 
    \begin{cases}
        |x - y|^a & \quad \text{if } 0 < \alpha \leq 1;
        \\
        \alpha |x - y| \max\{x, y\}^{a-1} & \quad \text{if } 1 < \alpha 
    \end{cases}
\end{equation}
for nonnegative $x,y$.
In the case $0 < \alpha \leq 1$, we therefore have, by (P3),
\begin{equation*}
    \left| G(X, W)^\alpha - (A X)^\alpha \right| 
    \leq \left| G(X, W) - (A X) \right|^\alpha
    \leq Y^\alpha .
\end{equation*}
But $Y$ has finite moments of all orders by (P1), so the bound in \eqref{eq:gc} holds.

Next consider the case $1 < \alpha < 2$.  Using \eqref{eq:eb} again, we have
\begin{equation*}
    \left| G(X, W)^\alpha - (A X)^\alpha \right| 
    \leq \alpha \left| G(X, W) - (A X) \right| 
        \max\{G(X, W), A X\}^{\alpha - 1}.
\end{equation*}
In view of (P3) above and the identity $2\max\{x, y\} = |x-y| + x + y$, we
obtain
\begin{equation*}
    \left| G(X, W)^\alpha - (A X)^\alpha \right| 
    \leq \alpha Y \left[ Y + G(X, W) + (A X) \right]^{\alpha - 1}.
\end{equation*}
Setting $a:= 1/(\alpha - 1)$ and using Jensen's inequality combined
with the fact that $\alpha < 2$ now yields
\begin{equation*}
    \EE \left| G(X, W)^\alpha - (A X)^\alpha \right| 
    \leq \alpha \left[ \EE Y^{a+1} 
        + \EE Y^a G(X, W) + \EE Y^a (A X) \right]^{\alpha - 1}.
\end{equation*}
We need to bound the three expectations on the right hand side.
In doing so we use Lemma~\ref{l:fmom} in the appendix, which
shows that $\EE X < \infty$ when $1 < \alpha < 2$.

The first expectations is finite
by (P1).  The third is finite by (P1) and independence of
$Y$, $A$ and $X$.\footnote{Note that $\EE A^\alpha = 1$ and, in the present
case, we have $1 < \alpha < 2$, so finiteness of $\EE A$ is assured.}
For the second, since $Y$ is independent of $X$ and
$W$, finiteness of the expectation reduces to finiteness of $\EE G(X, W)$. 
We have 
\begin{equation*}
    G(X,W) = G(X,W) \1\{X < \bar \phi(p^*)\} + G(X,W) \1\{X \geq \bar \phi(p^*)\}.
\end{equation*}
Taking expectations and observing that, given $X \geq \bar \phi(p^*)$, the
random variable $G(X,W)$ has distribution $\Pi_{p^*} (X, \diff
\phi')$, we have
\begin{equation*}
    \EE G(X, W) 
        \leq \int z \gamma(\diff z) 
        + \int \int \phi' \, \Pi_{p^*} (\phi, \diff \phi') \mu_{p^*}(\diff \phi)
        = \int z \gamma(\diff z) 
        + \int \phi \mu_{p^*}(\diff \phi).
\end{equation*}
The equality on the right is due to stationarity of $\mu_{p^*}$ under the
endogenous law of motion for firm productivity.
Since $\int z \gamma(\diff z)$ is finite by assumption and 
$\int \phi \mu_{p^*}(\diff \phi) = \EE X$, which is finite as stated above,
we conclude that under (P1)--(P3), the conditions of Theorem~\ref{t:bk2}
are satisfied.




\section{Conclusion}

\label{s:c}

In this paper we investigated the entry-exit model of \cite{hopenhayn1992entry}
after removing the upper bound on firm productivity,  allowing us to consider
more realistic representations of firm growth. In this setting we provided an
exact characterization of existence of stationary equilibria as well as a 
Lyapunov-type sufficient condition for existence. We also provided a new
decomposition of the equilibrium distribution of firms, as well as new
representations of the rate of entry and aggregate supply.

We showed that, when the boundedness assumption on productivity is removed, the
Pareto tail of the distribution is predicted under a wide and empirically
plausible class of specifications for firm-level productivity growth.  Thus, by
relaxing a purely technical assumption, we show that the Pareto tail in the firm
size distribution is, in fact, a highly robust prediction of the Hopenhayn
entry-exit model.

The machinery employed in this paper to prove power law results draws on
\cite{goldie1991implicit}, which uses implicit renewal theory to analyze Pareto
tails of a range of time-invariant probability laws.  
The tool set recently developed in \cite{beare2022determination} is also well-suited to
the setting we consider, and its application might lead to further insights.

The methodology developed above can potentially be applied to other
settings where a power law is observed.  For example, the wealth distribution is
Pareto tailed, while the rate of return on wealth (and hence the growth rate of
wealth) has been found to vary with the level of wealth in systematic ways (see,
e.g., \cite{fagereng2016heterogeneity}).  Similarly, the distribution of city
sizes tends to a Pareto tail.  At the same time, Gibrat's law fails in this
setting too (see, e.g., \cite{cordoba2008generalized}).  Such topics are left to
future work.




\appendix

\section{Proofs}

In the proofs we use the operator notation 
\begin{equation*}
    (\Gamma u)(\phi, p) := \int u(\phi', p) \Gamma(\phi, \diff \phi')   
    \; \text{ for each } u \in \cf,
\end{equation*}
while $\pP$
denotes the Borel probability measures on $\RR_+$.  The symbol $\preceq$
represents first order stochastic dominance.  All undefined notation and
terminology associated with Markov models follows \cite{meyn2012markov}. 

Throughout the appendix, all assumptions in Section~\ref{s:ee} are in force.

\subsection{Preliminary Results}

This section contains proofs of preliminary results needed for the main theorem.
In the lemma below, we take $\delta \in (\beta, 1)$ from Assumption~\ref{a:fc}.

\begin{lemma}
    \label{l:gamprop}
    The operator $\Gamma$ is invariant on $\cf$ and $\Gamma \kappa
    \leq \kappa / \delta$.
\end{lemma}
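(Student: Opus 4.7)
The plan is to handle the drift bound $\Gamma\kappa \leq \kappa/\delta$ first, since $\kappa$-boundedness of $\Gamma u$ for general $u \in \cf$ will follow immediately from it. The key identity will be that $\kappa$ satisfies a Bellman-type recursion in $\Gamma$. Explicitly, writing out the definition in \eqref{eq:dkappa} and separating the $t=0$ term,
\begin{equation*}
    \kappa(\phi, p)
    = \hat\pi(\phi, p) + \sum_{t \geq 1} \delta^t \EE_\phi \hat\pi(\phi_t, p)
    = \hat\pi(\phi, p) + \delta \sum_{t \geq 0} \delta^t \EE_\phi \hat\pi(\phi_{t+1}, p).
\end{equation*}
Applying the Markov property (or Fubini together with the tower rule), the last sum equals $(\Gamma\kappa)(\phi, p)$, so $\kappa = \hat\pi + \delta\, \Gamma\kappa$. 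Since $\hat\pi \geq 1 \geq 0$ by the construction of $b$, rearranging yields $\Gamma\kappa \leq \kappa/\delta$.

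For invariance of $\cf$ under $\Gamma$, I would verify the three defining properties in turn. Continuity of $(\phi, p) \mapsto (\Gamma u)(\phi, p)$ is direct from Assumption~\ref{a:gcm}. Monotonicity in $\phi$ uses Assumption~\ref{a:pgamma}: if $\phi \leq \tilde\phi$, then $\Gamma(\phi, \cdot) \preceq \Gamma(\tilde\phi, \cdot)$, and since $u(\cdot, p)$ is increasing, integrating against the stochastically dominant kernel gives $(\Gamma u)(\phi, p) \leq (\Gamma u)(\tilde\phi, p)$. Monotonicity in $p$ is even easier: $\Gamma$ does not act on $p$, so $p \mapsto u(\phi', p)$ increasing passes through the integral. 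Finally, for $\kappa$-boundedness, if $\|u\|_\kappa =: C < \infty$, then $|u| \leq C\kappa$ pointwise, hence
\begin{equation*}
    |(\Gamma u)(\phi, p)|
    \leq \int |u(\phi', p)|\, \Gamma(\phi, \diff\phi')
    \leq C\, (\Gamma\kappa)(\phi, p)
    \leq \frac{C}{\delta}\, \kappa(\phi, p),
\end{equation*}
so $\|\Gamma u\|_\kappa \leq C/\delta < \infty$.

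There is no serious obstacle here; the only point requiring mild care is the interchange implicit in the Markov-property step, which is justified because $\hat\pi \geq 0$ allows Tonelli to be applied without integrability concerns. Everything else is a direct consequence of Assumptions~\ref{a:pgamma} and~\ref{a:gcm} plus the definition of $\kappa$. A byproduct worth noting for later use in proving Lemma~\ref{l:tcm} is that the identity $\kappa = \hat\pi + \delta\, \Gamma\kappa$ immediately gives finiteness of $\kappa$ (wherever the series defining it is finite) and supplies the geometric modulus $\beta/\delta < 1$ for the Bellman contraction.
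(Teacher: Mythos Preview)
Your proof is correct and follows essentially the same route as the paper: both derive $\Gamma\kappa \leq \kappa/\delta$ from the series definition of $\kappa$ (you via the recursion $\kappa = \hat\pi + \delta\,\Gamma\kappa$, the paper via an equivalent index shift), and both get continuity from Assumption~\ref{a:gcm} and $\kappa$-boundedness from $|u| \leq \|u\|_\kappa \kappa$ together with the drift bound. The one spot where the paper is slightly more careful is the monotonicity step: since the monotone-increasing property of $\Gamma$ is stated only for \emph{bounded} increasing test functions, the paper truncates $u$ to $u_n := u\wedge n$, notes each $\Gamma u_n$ is increasing, and then passes to the limit $\Gamma u_n \uparrow \Gamma u$ via the Monotone Convergence Theorem---your direct appeal to stochastic dominance for the unbounded $u(\cdot,p)$ is valid (it is bounded below, being increasing on $[0,\infty)$, and integrable by the $\kappa$-bound you established) but tacitly relies on this same extension.
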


\begin{proof}
    The last claim is easy to check, since, by the definition of $\kappa$ in
    \eqref{eq:dkappa}, we have
    \begin{equation}
        \label{eq:gk}
        \Gamma \kappa 
        = \sum_{t \geq 0} \delta^t \Gamma^{t+1} \hat \pi
        = (1/\delta) \sum_{t \geq 0} \delta^{t+1} \Gamma^{t+1} \hat \pi
        \leq (1/\delta) \sum_{t \geq 0} \delta^t \Gamma^t \hat \pi
        = (1/\delta) \kappa.
    \end{equation}
    Now fix $u \in \cf$.  That $\Gamma u$ is $\kappa$-bounded follows from the
    previous inequality and the pointwise bound $|u| \leq \| u \|_\kappa \, \kappa$.
    Continuity of $\Gamma u$ is immediate from Assumption~\ref{a:gcm}.
    Regarding monotonicity, let $u_n = u \1\{u \leq n\} + n \1\{u > n\}$ for
    each $n \in \NN$.  Then $u_n$ is increasing for each $n$ and also bounded,
    so $\Gamma u_n$ is increasing for each $n$.\footnote{By
        Assumption~\ref{a:pgamma}, the kernel $\Gamma$ is monotone increasing.
        This implies that $\Gamma f$ is increasing whenever $f$ is measurable,
    increasing and bounded.}
    Moreover, by the Monotone Convergence Theorem, $\Gamma u_n \uparrow \Gamma
    u$.  Since monotonicity is preserved under pointwise limits, $\Gamma u$ is
    also increasing.  Hence $\Gamma u \in \cf$ as claimed.
\end{proof}

\begin{proof}[Proof of Lemma~\ref{l:tcm}]
    Pick any $u \in \cf$.  Using \eqref{eq:gk} and $\hat \pi \leq \kappa$, we
    have
    \begin{equation*}
        |Tu | = \left| \pi + \beta \max \left\{ 0, \Gamma u \right\} \right|
        \leq \hat \pi + \beta \Gamma |u |
        \leq \hat \pi + \beta \| u \|_\kappa \Gamma \kappa
        \leq (1 + \beta \| u \|_\kappa / \delta) \kappa.
    \end{equation*}
    Hence $\| Tu \|_\kappa$ is finite.  In addition, $Tu$ is continuous and
    increasing because $Tu  =  \pi + \beta \max \left\{ 0, \Gamma u \right\}$
    and $\pi$ and $\Gamma u$ both have these properties (by
    Assumptions~\ref{a:pq} and \ref{a:gcm}).  Hence $T$ maps $\cf$ into
    itself.  In addition, $T$ is a contraction mapping, since, given $u, v$ in
    $\cf$, 
    \begin{equation*}
        | Tu - Tv |
        \leq \beta \Gamma |u-v|
        \leq \beta \| u - v\|_\kappa \Gamma \kappa
        \leq (\beta/\delta) \| u - v\|_\kappa \kappa.
    \end{equation*}
    Dividing both sides by $\kappa$ and taking the supremum yields $d(Tu, Tv) \leq 
    (\beta/\delta) d(u, v)$. Recalling that $\delta > \beta$, the claim of
    contractivity is established.  

    To see that the fixed point $\bar v$ is strictly increasing, pick
    any $w \in \cf$ and observe that $Tw = \pi + \beta \max\{0, \Gamma w\}$
    is strictly increasing, since $\Gamma w$ is increasing and $\pi$ is
    strictly increasing.  In other words, $T$ maps elements of $\cf$ into
    strictly increasing functions.  Given that $\bar v = T\bar v$, the function
    $\bar v$ must itself have these properties.

    Finally, to see that $\bar v(\phi, p) < 0$ if $\phi = 0$ or $p = 0$,
     let $h(\phi, p) := \sum_{t \geq 1}
    \beta^t \EE_\phi |\pi(\phi_t, p)|$ where $\{\phi_t\}$ is a productivity
    process starting at $\phi$ and generated by $\Gamma$.  Clearly $\bar v(\phi,
    p) \leq \pi(\phi, p) + h(\phi, p)$.  If $p = 0$, then $\pi(\phi, p) < 0$
    and $h(\phi, p) \leq 0$ by Assumption~\ref{a:pq}.  Hence $\bar v(\phi, p) <
    0$.  In addition, if $\phi = 0$, then profits are negative in the first
    period, by Assumption~\ref{a:pq}, and subsequent lifetime profits are
    nonpositive by Assumption~\ref{a:npro}.  Once again, we have $\bar v(\phi, p)
    < 0$.
\end{proof}

\begin{proof}[Proof of Lemma~\ref{l:barphi}]
    Let $\Phi(p) := \{ \phi \geq 0  \; | \; (\Gamma \bar v)(\phi, p) \geq 0 \}$.
    This set is nonempty when $p > 0$ by $\bar v \geq \pi$ and
    Assumption~\ref{a:pgamma}.  
    Moreover, $\Phi(p)$ is closed because, if $\{ \phi_n \} \subset \Phi(p)$
    and $\phi_n \to \phi$, then, by the continuity in Assumption~\ref{a:gcm},
    we have $0 \leq (\Gamma \bar v)(\phi_n, p) \to (\Gamma \bar v)(\phi, p)$.
    Hence $(\Gamma \bar v)(\phi, p) \geq 0$ and, therefore, $\phi \in \Phi(p)$.
    Since $\Phi(p)$ is closed and nonempty when $p > 0$, $\bar
    \phi(p) = \min \Phi (p)$ exists in $\RR_+$.
    
    Due to monotonicity of $\bar v$, the correspondence $\Phi$ is such that $p \leq q$
    implies $\Phi(p) \subset \Phi(q)$.
    Hence  $\bar \phi(p) = \min \Phi (p)$ is decreasing.
    Moreover, the set $\Phi(p)$ does not contain $0$ because, for any $p \geq 0$,
    we have $\Gamma \bar v(0, p) = \bar v(0, p) < 0$,
    where the equality is by Assumption~\ref{a:pgamma} and the inequality is
    by Lemma~\ref{l:tcm}.  Hence $\bar \phi(p) > 0$.
\end{proof}

\begin{proof}[Proof of Lemma~\ref{l:pphis}]
    Let $e(p) := \int \bar v( \phi, p) \gamma(\diff \phi) - c_e$.  The function $e$
    is finite on $\RR_+$ because $\bar v \leq \kappa$ and 
    \begin{equation*}
        \int \kappa( \phi, p) \gamma(\diff \phi)
        = \sum_{t \geq 0} \delta^t 
        \int [\pi( \phi, p) + b] \, \gamma(\diff \phi)
        < \infty
    \end{equation*}
    by Assumption~\ref{a:fc}.  The function $e$ is also continuous on $\RR_+$.
    To see this, take $p_n \to p$.  Since convergent sequences are bounded, we
    can choose $\bar p$ such that $p_n \leq \bar p$ for all $n$.  By
    monotonicity, it follows that $\bar v(\phi', p_n) \leq \bar v(\phi', \bar p)$
    for all $\phi'$.  Continuity of $\bar v$ and the Dominated Convergence
    Theorem now give $e(p_n) \to e(p)$.

    If $p = 0$, then, by Lemma~\ref{l:tcm}, we have $\bar v(\phi, p)<0$ for all
    $\phi$, so $e(p) < 0$.  Conversely, if $p$ is large enough, then $\int
    \pi( \phi, p) \gamma(\diff \phi) \geq c_e$ by Assumption~\ref{a:newent}.
    As $\bar v \geq \pi$, this implies that $e(p) \geq 0$.  Hence, by the
    Intermediate Value Theorem, there exists a $p > 0$ such that $e(p) = 0$.
    Uniqueness now follows from strict monotonicity of $e$, which in turn
    rests on strict monotonicity of $\bar v$ (see Lemma~\ref{l:tcm}).
\end{proof}

\begin{lemma}
    \label{l:qpirr}
    For all $p > 0$, the transition kernel $\Pi_p$ defined in \eqref{eq:upm} is aperiodic,
    $\gamma$-irreducible and admits the accessible atom $\mathbf a_p := [0, \bar \phi(p))$.
\end{lemma}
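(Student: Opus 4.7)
The plan is to exploit the explicit form of $\Pi_p$ given in \eqref{eq:upm}. For every $\phi \in \mathbf a_p$, i.e., $\phi < \bar \phi(p)$, the formula collapses to $\Pi_p(\phi, B) = \gamma(B)$ for every Borel $B$, so the one-step transition distribution is constant across $\mathbf a_p$. This is precisely the definition of an atom in the sense of \cite{meyn2012markov}. Note also that $\bar \phi(p) > 0$ by Lemma~\ref{l:barphi}, so $\gamma(\mathbf a_p) = \gamma([0, \bar \phi(p))) > 0$ by Assumption~\ref{a:aper}.

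Next I would show that $\mathbf a_p$ is accessible from every state. If $\phi \in \mathbf a_p$ this is immediate, since $\Pi_p(\phi, \mathbf a_p) = \gamma(\mathbf a_p) > 0$. For $\phi \geq \bar \phi(p)$, I would couple $\Pi_p$ and $\Gamma$ up to the first entrance into $\mathbf a_p$: as long as the chain remains weakly above $\bar \phi(p)$, the two kernels coincide. By Assumption~\ref{a:pgamma}(a), there exists $n \in \NN$ with $\Gamma^n(\phi, [0, \bar \phi(p))) > 0$, which implies that the $\Gamma$-chain started at $\phi$ enters $\mathbf a_p$ at or before time $n$ with positive probability. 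The coupling transfers this probability to the $\Pi_p$-chain, yielding some $k \leq n$ with $\Pi_p^k(\phi, \mathbf a_p) > 0$.

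With the atom accessible, the remaining properties follow with little effort. For $\gamma$-irreducibility, fix any $\phi$ and any Borel $B$ with $\gamma(B) > 0$, and choose $n$ with $\Pi_p^n(\phi, \mathbf a_p) > 0$. Then
\begin{equation*}
    \Pi_p^{n+1}(\phi, B)
    \geq \int_{\mathbf a_p} \Pi_p(\phi', B) \, \Pi_p^n(\phi, \diff \phi')
    = \gamma(B) \, \Pi_p^n(\phi, \mathbf a_p) > 0.
\end{equation*}
For aperiodicity, observe that $\Pi_p(\phi', \mathbf a_p) = \gamma(\mathbf a_p) > 0$ for any $\phi' \in \mathbf a_p$, so $1 \in \{n \geq 1 : \Pi_p^n(\mathbf a_p, \mathbf a_p) > 0\}$ and the atom has period one. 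The only step with any genuine content is the coupling argument used for accessibility from above, where one must verify that modifying the kernel below $\bar \phi(p)$ cannot obstruct entrance into $\mathbf a_p$ from states $\phi \geq \bar \phi(p)$; everything else is a direct consequence of the atom structure together with Assumption~\ref{a:aper}.
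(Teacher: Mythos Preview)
Your proof is correct and follows essentially the same route as the paper's: both verify the atom property directly from the definition of $\Pi_p$, reach $\mathbf a_p$ from states $\phi \geq \bar\phi(p)$ by coupling the $\Pi_p$-chain with the $\Gamma$-chain (the paper does this implicitly by taking the \emph{minimal} $n$ with $\Gamma^n(\phi,\mathbf a_p)>0$, which forces the two chains to coincide through time $n$), and then derive $\gamma$-irreducibility and aperiodicity from the accessible atom together with Assumption~\ref{a:aper}. The only differences are organizational---you establish the atom first and derive irreducibility from it, whereas the paper proves irreducibility directly and records the atom afterward---and your treatment of aperiodicity is slightly more explicit than the paper's one-line appeal to Assumption~\ref{a:aper}.
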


\begin{proof}
    Fix $\phi \in \RR_+$ and let $B$ be any Borel set such that $\gamma(B) >
    0$.  Let $\{\phi_t\}$ be a Markov process on $\RR_+$ generated by $\Pi_p$
    and starting at $\phi_0 = \phi$. Evidently, if $\phi < \bar \phi(p)$, then
    $\Pi_p(\phi, B) = \gamma(B) > 0$, so $B$ is reachable from $\phi$.  If
    instead $\phi \geq \bar \phi(p)$, then we let $m$ be the
    smallest $n \in \NN$ such that $\Gamma^n(\phi, [0, \bar \phi(p)) > 0$.
    By the Chapman--Kolmogorov equations, we have
    \begin{equation*}
        \PP\{ \phi_{m+1} \in B\} 
         = \int \Pi_p(\phi', B) \Pi_p^m(\phi, \diff \phi')
         \geq \int_0^{\bar \phi(p)} \Pi_p(\phi', B) \Pi_p^m(\phi, \diff \phi').
    \end{equation*}
    The right-hand side evaluates to $ \gamma(B) \, \Gamma^m(\phi, [0, \bar
    \phi(p))$, which is strictly positive by the assumed positivity of
    $\gamma(B)$ and the
    definition of $m$.  Again $B$ is reachable, and hence $\Pi_p$ is
    $\gamma$-irreducible.  Aperiodicity now follows from
    Assumption~\ref{a:aper}.

    Finally, $\gamma(\mathbf a_p) > 0$ by Assumption~\ref{a:aper}.
    The interval $\mathbf a_p$ is an atom because $\Pi_p(\phi, A) = \Pi_p(\psi,
    A) = \gamma(A)$ for all $\phi, \psi < \bar \phi(p)$.
\end{proof}

The next lemma discusses petite sets, as
 defined in \cite{meyn2012markov}.

\begin{lemma}
    \label{l:small}
    If $p > 0$ and $d > 0$, then $[0, d]$ is a petite set for $\Pi_p$.
\end{lemma}

\begin{proof}
    Fix $p, d > 0$.
    It suffices to show existence of a nontrivial Borel measure $\nu$ on $\RR_+$
    and an $m \in \NN$ such that $\Pi_p^m(\phi, B) \geq \nu(B)$ whenever $0
    \leq \phi \leq d$ and $B \in \bB$.  Let $\mathbf a_p := [0, \bar
    \phi(p))$ and take the smallest $n \in \NN$ such that $\epsilon :=
    \Gamma^n(d, \mathbf a_p) > 0$.  (This $n$ is finite by
    Assumption~\ref{a:pgamma}.)  Pick any  $\phi \in [0, d]$ and 
    $B \in \bB$, and let $\{ \phi_t \}$ be generated by $\Pi_p$ from initial
    condition $\phi$.  By the law of total probability,
    \begin{equation*}
        \Pi_p^{n+1}(\phi, B)
        = \PP \{ \phi_{n+1} \in B \}
        \geq \PP \{ \phi_{n+1} \in B \,|\, \phi_n \in \mathbf a_p  \}
        \PP \{ \phi_n \in \mathbf a_p \}.
    \end{equation*}
    By monotonicity of $\Gamma$ and the definition of $\Pi_p$,
    we then have
    \begin{equation*}
        \Pi_p^{n+1}(\phi, B)
        \geq \gamma(B) \PP \{ \phi_n \in \mathbf a_p\}
        = \gamma(B) \Gamma^n(\phi, \mathbf a_p)
        \geq \gamma(B) \Gamma^n(d, \mathbf a_p)
        = \gamma(B) \epsilon.
    \end{equation*}
    Setting $\nu := \epsilon \gamma$ and $m := n+1$ therefore gives 
    $\Pi_p^m(\phi, B) \geq \nu(B)$, which verifies the claim in the lemma.
\end{proof}

\begin{lemma}
    \label{l:uni_stat}
    Fix $p > 0$ and let $\Pi_p$ and $\mathbf a_p$ be as in Lemma~\ref{l:qpirr}. 
    The following statements are equivalent:
    \begin{enumerate}
        \item[\rm{(i)}] There exists a $\mu_p \in \pP$ such that \eqref{eq:dlom}
            holds.
        \item[\rm{(ii)}] Expected firm lifespan $\EE \tau(p)$ is finite.
    \end{enumerate}
    If either and hence both of these conditions holds, then $\mu_p (\mathbf
    a_p) > 0$ and
    \begin{equation}\label{eq:pof}
        \mu_p (B) =  \mu_p (\mathbf a_p) 
        \cdot \EE \sum_{t = 1}^{\tau(p)} \1\{\phi_t \in B\}
    \end{equation}
    for all $B \in \bB$.  In addition, aggregate supply obeys
    \begin{equation}\label{eq:asob}
        0 < \int q(\phi, p) \mu_p (\diff \phi) 
            =  \mu_p (\mathbf a_p) \EE \, \ell(p).
    \end{equation}
\end{lemma}
\begin{proof}[Proof of Lemma~\ref{l:uni_stat}]
    Fix $p > 0$.  The kernel $\Pi_p$ is $\gamma$-irreducible by
    Lemma~\ref{l:qpirr}.  The set $\mathbf a_p = [0, \bar \phi(p))$ is an atom for
    $\Pi_p$ because $\Pi_p(\phi, B) = \gamma(B)$ whenever $\phi \in \mathbf
    a_p$. Moreover, $\gamma(\mathbf a_p) > 0$ by Assumption~\ref{a:aper}.
    It now follows from Theorem~10.2.2 of \cite{meyn2012markov} that $\Pi_p$ is
    positive recurrent -- and hence admits a stationary probability $\mu_p$ -- if and only
    the expected return time to $\mathbf a_p$ is finite.  This is equivalent to 
    $\EE \tau(p) < \infty$, which proves the first claim in the lemma.    

    Now suppose that $\EE \tau(p) < \infty$ holds and let $\mu_p$ be stationary
    for $\Pi_p$. Equation \eqref{eq:pof} follows from See Theorem~10.4.9 of
    \cite{meyn2012markov}. Positivity of $\mu_p (\mathbf a_p)$ holds because
    $\mathbf a_p$ is an accessible atom (Lemma~\ref{l:qpirr}).  Since output $q$
    is nonnegative, \eqref{eq:pof} extends via the Monotone Convergence Theorem
    to
    \begin{equation*}
        \int q(\phi, p) \mu_p (\diff \phi) 
        =  \mu_p (\mathbf a_{p}) 
            \cdot \EE \sum_{t = 1}^{\tau(p)} q(\phi_t, p)
            =  \mu_p (\mathbf a_{p}) 
            \EE \, \ell(p),
    \end{equation*}
    which gives the equality in \eqref{eq:asob}.  Positivity of supply follows
    from $q(\phi, p) > 0$ for all $\phi > 0$ and $\mu_p (\mathbf a_{p}) > 0$, since
    \begin{equation*}
        \int q(\phi, p) \mu_p (\diff \phi)
        \geq \int_{\mathbf a_p} q(\phi, p) \mu_p (\diff \phi).
        \qedhere
    \end{equation*}
\end{proof}

\subsection{Existence and Uniqueness}

In what follows, $p^*$ is the equilibrium entry price, as defined in
\eqref{eq:eqenp}.

\begin{proof}[Proof of Theorem~\ref{t:bk1}]
    Suppose first that $\EE \, \ell(p^*)=\infty$  and yet there exists a pair
    $(M^*, \mu^*)$ such that $(p^*, M^*, \mu^*)$ is stationary recursive
    equilibrium (SRE).  By definition, $\mu^*$ is a stationary productivity
    measure  and a finite invariant measure for $\Pi_{p^*}$. This means that (i)
    in Lemma~\ref{l:uni_stat} holds. Hence we can apply \eqref{eq:asob},
    obtaining
    \begin{equation*}
        \int q(\phi, p^*) \mu^* (\diff \phi)
        = \mu_p (\mathbf a_p) \EE \, \ell(p^*) 
        = \infty.
    \end{equation*}
    Since $\mu_p (\mathbf a_p) > 0$, aggregate supply is infinite, while demand
    $D(p^*)$ is finite. Contradiction.

    Now suppose instead that $\EE \, \ell(p^*)$ is finite.   Observe that, by
    monotonicity of $q$ and the definition of $\tau(p^*)$, we have
    \begin{equation*}
        \sum_{t=1}^{\tau(p^*)} q(\phi_t, p^*)
        \geq 
        \sum_{t=1}^{\tau(p^*)}  q(\bar \phi(p^*), p^*) 
        = \tau(p^*)  q(\bar \phi(p^*), p^*) .
    \end{equation*}
    \begin{equation*}
        \fore
        \tau(p^*) \leq c \sum_{t=1}^{\tau(p^*)} q(\phi_t, p^*)
        \text{ for some } c > 0.
    \end{equation*}
    Taking expectations gives $\EE \tau(p^*) \leq \EE \, \ell(p^*) < \infty$.
    Hence, by Lemma~\ref{l:uni_stat}, the distribution $\mu_{p^*}$ 
    satisfying \eqref{eq:dlom} at $p^*$ is well defined, and \eqref{eq:pof}
    and \eqref{eq:asob} both hold at $p^*$. We take $\mu^*$ as given by step
    (S4), which is well defined by \eqref{eq:asob}, and $M^*$ as given by (S5);
    that is $M^* \coloneq \mu^*\setntn{\phi}{\phi < \bar \phi(p^*)}$.  

    By construction, the triple $(p^*, M^*, \mu^*)$ satisfies all of the
    conditions in Section~\ref{ss:sredef}. For example, the goods market clears
    because, by the definition of the scaling constant $s$ in (S4), we have
    $\int q( \phi, p^*) \mu^*(\diff \phi) = s  \int q( \phi, p^*) \mu(\diff
    \phi) = D(p^*)$.  The time-invariance condition~\ref{eq:invar} holds
    because, given any $B$ in $\bB$,
    \begin{align*}
        \int \Gamma(\phi, B) \1\{\phi \geq \bar \phi(p^*)\} \mu^*(\diff \phi)
        & + M^* \gamma(B)
        \\
         & = \int \Gamma(\phi, B) \1\{\phi \geq \phi^* \} \mu^*(\diff \phi) 
            + \mu^* \{\phi < \phi^*\} \gamma(B)
            \\
         & = s 
             \int
             \left[ 
                 \Gamma(\phi, B) \1\{\phi \geq \phi^* \}  
                 + \1 \{\phi < \phi^*\} \gamma(B)
             \right]
             \mu(\diff \phi).
    \end{align*}
    Since $\mu$ satisfies \eqref{eq:dlom}, this
    last expression is just $s \mu (B)$, or, equivalently, $\mu^*(B)$.
    Hence $(p^*, M^*, \mu^*)$  is an SRE.

    The triple $(p^*, M^*, \mu^*)$ is the only SRE in $\eE$ because the time invariance
    condition has at most one solution, by Lemma~\ref{l:qpirr}, and the price
    $p^*$ is uniquely determined by Lemma~\ref{l:pphis}. Given $p^*$ and
    $\mu^*$, the constant $M^*$ is then uniquely determined by (S5).

    The decomposition \eqref{eq:kdec} follows from \eqref{eq:pof}.
    Multiplying both sides by $s$ from (S4) gives 
    \begin{equation*}
        \mu^* (B) 
        =  \mu^* (\mathbf a)
            \cdot \EE \sum_{t = 1}^\tau(p^*) \1\{\phi_t \in B\}
        =  M^*
            \cdot \EE \sum_{t = 1}^\tau(p^*) \1\{\phi_t \in B\}.
    \end{equation*}
    Equation \eqref{eq:ase} follows from
    \eqref{eq:asob} and the definition of $M^*$. 
\end{proof}

\subsection{Drift}

\begin{proof}[Proof of Proposition~\ref{p:geoerg}]
    Adopting the conditions of Proposition~\ref{p:geoerg},
    we first show that $\Pi_{p}$ is $V$-uniformly ergodic for all $p > 0$
    via Theorem~16.1.2 of \cite{meyn2012markov}.
    Lemma~\ref{l:qpirr} shows that $\Pi_p$ is irreducible and aperiodic,
    so we need only show that the drift condition (V4) defined in Chapter~15
    of \cite{meyn2012markov} is holds with respect to a petite set.
    By Lemma~15.2.8 of the same reference, this will be true whenever there
    exists a nonnegative function $V$ on $\RR_+$ such that the sublevel set
    $C_a := \setntn{\phi \in \RR_+}{V(\phi) \leq a}$ is petite for each $a \geq 0$
    and, for some positive constants $\alpha < 1$ and $K < \infty$,
    \begin{equation}
        \label{eq:mtdc}
        \int V(\phi') \Pi_p(\phi, \diff \phi') 
            \leq \alpha V(\phi) + K
            \text{ for all } \phi \geq 0.
    \end{equation}
    Set $V(\phi) = q(\phi, p)$. For this function the sublevel sets $C_a$ are
    all intervals of the form $[0, d]$ for some $d \geq 0$ due to monotonicity
    of $q$.  Such sets are petite by Lemma~\ref{l:small}.  Moreover, for any
    fixed $\phi \geq 0$, the definition of $\Pi_p$ and the drift condition for
    incumbents in \eqref{eq:dc} yields
    \begin{equation*}
        \int V(\phi') \Pi_p(\phi, \diff \phi') 
        \leq \int V(\phi') \gamma(\diff \phi') 
            + \int V(\phi') \Gamma(\phi, \diff \phi') .
    \end{equation*}
    The first term is finite by Assumption~\ref{a:aper}.  The second term
    is bounded by \eqref{eq:dc}.  Putting these bounds together yields
    \eqref{eq:mtdc} with $\alpha := \lambda$ and $K := \int V(\phi')
    \gamma(\diff \phi') + L$.

    Next we claim that $\EE \, \ell(p)$ is finite for this same arbitrary $p$.
    To see this, let $\mu$ be the unique stationary distribution of $\Pi_p$,
    existence of which is guaranteed by $V$-uniform ergodicity (see Theorem~16.1.2
    of \cite{meyn2012markov}).
    The term $m(p) := \int q(\phi, p) \mu(\diff p)$ is finite by
    Proposition~4.24 of \cite{hairer2006ergodic}, combined with the fact that 
    $V(\phi) := q(\phi, p)$ satisfies the drift condition \eqref{eq:mtdc}.
    Moreover, $\mu$ satisfies \eqref{eq:pof}, which can be expressed as
    $\mu(B) = c \, \EE \sum_{t = 1}^{\tau(p)} \1_B\{\phi_t\}$ for some
    positive constant $c$.  We can extend up from indicator functions such as
    $\1_B$ to the nonnegative measurable function $V$ via the Monotone
    Convergence Theorem, so 
    \begin{equation*}
        \int V(\phi) \mu(\diff \phi) 
        = c \, \EE \sum_{t = 1}^{\tau(p)} V(\phi_t)
            = c \, \EE \sum_{t = 1}^{\tau(p)} q(\phi_t, p)
            = c \, \cdot \EE \, \ell(p).
    \end{equation*}
    Hence $m(p) := \int q(\phi, p) \mu(\diff p) = \int V(\phi) \mu(\diff p) =
    c \, \EE \, \ell(p)$.  Since $m(p)$ was just shown to be finite and $c$ is
    positive, the claim is verified.

    As $\EE \, \ell(p)$ is finite for all positive $p$, it is finite at $p^*$, and
    hence the conclusions of Theorem~\ref{t:bk1} hold.
\end{proof}

\subsection{Pareto Tails}

\begin{proof}[Proof of Theorem~\ref{t:bk2}]
    To begin, recall from \eqref{eq:defgw} that the recursion $\phi_{t+1} = G(\phi_t,
    W_{t+1})$ reproduces the Markov dynamics for an incumbent firm embodied in
    $\Gamma$.  If we take $\{Z_t\}$ to be {\sc iid} draws from $\gamma$ and set
    \begin{equation}
        \label{eq:defh}
        H(\phi, w, z) 
        := G(\phi, w) \1\{\phi \geq \bar \phi(p^*) \} 
            + z \1\{\phi < \bar \phi(p^*) \},
    \end{equation}
    then the recursion 
    \begin{equation}
        \label{eq:efd}
        \phi_{t+1} 
        = H(\phi_t, W_{t+1}, Z_{t+1})
    \end{equation}
    reproduces equilibrium firm dynamics corresponding to $\Pi_{p^*}$.  

    Continuing to the proof of Theorem~\ref{t:bk2}, we make use of the
    implicit renewal theory found in Corollary~2.4 of
    \cite{goldie1991implicit}.  By Assumption~\ref{a:gc}, the distribution
    function of $A$ is continuous and hence $\ln A$ is nonarithmetic.
    Moreover, $\EE A^\alpha = 1$ and $\EE A^{\alpha+1} < \infty$, the latter
    of which gives $\EE A^\alpha \max\{\ln A, 0\} < \infty$. Hence $A$
    satisfies the conditions of Lemma~2.2 of \cite{goldie1991implicit}, and,
    as a result, we need only check that $\EE | H(X, W, Z)^\alpha - (A
    X)^\alpha |$ is finite for $H$ defined in \eqref{eq:defh}.\footnote{This
    will confirm Eq.\ (2.16) in \cite{goldie1991implicit}, which provides a
Pareto law with index $\alpha$ for $X$.} For this it suffices to bound
expectation of the two random variables
    \begin{equation*}
        I_1 := 
            \left| 
                H(X, W, Z)^\alpha - (A X)^\alpha 
            \right| 
            \1\{X < \bar \phi(p^*)\}
            = 
            \left| 
                Z^\alpha - (A X)^\alpha 
            \right| 
            \1\{X < \bar \phi(p^*)\}
    \end{equation*}
    and
    \begin{equation*}
        I_2 := 
            \left| 
                H(X, W, Z)^\alpha - (A X)^\alpha 
            \right| 
            \1\{X \geq \bar \phi(p^*)\}
            \leq 
            \left| 
                G(X, W)^\alpha - (A X)^\alpha 
            \right| 
    \end{equation*}
    For $I_1$ we can use the triangle inequality and the bound on
    $X$ to obtain $I_1 \leq Z^\alpha + A^\alpha \bar \phi(p^*)$, and the
    expectation of the right hand side is finite by our assumptions on $A$ and
    $Z$.
    For $I_2$, finiteness of expectation holds by the inequality on
    the right hand side of the definition of $I_2$ and Assumption~\ref{a:gc}.
\end{proof}

\begin{lemma}
    \label{l:fmom}
    Under conditions (a)--(c) of Section~\ref{s:ht}, the first moment of the firm
    size distribution is finite whenever $\alpha > 1$.
\end{lemma}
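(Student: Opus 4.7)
The plan is to establish the Foster--Lyapunov drift inequality $\Pi_{p^*} V \leq \lambda V + K$ with linear Lyapunov function $V(\phi) = \phi$, contraction coefficient $\lambda < 1$, and finite constant $K$, and then pass from this inequality to the moment bound $\int \phi \, \mu(\diff \phi) < \infty$.

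The first ingredient is the strict inequality $\EE A < 1$. Assumption~\ref{a:gc} requires $A$ to have a continuous distribution function, so $A$ is not almost surely constant. Since $\alpha > 1$, the map $x \mapsto x^{1/\alpha}$ is strictly concave on $[0,\infty)$, so strict Jensen's inequality applied to $A^\alpha$ yields $\EE A = \EE (A^\alpha)^{1/\alpha} < (\EE A^\alpha)^{1/\alpha} = 1$.

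For the drift inequality, using the definition of $\Pi_{p^*}$ in \eqref{eq:upm}: for $\phi < \bar \phi(p^*)$ one has $\Pi_{p^*} V(\phi) = \int z \, \gamma(\diff z)$, which is finite since $\gamma$ has finite first moment; for $\phi \geq \bar \phi(p^*)$, condition (P3) combined with independence gives $\Pi_{p^*} V(\phi) = \EE G(\phi, W) \leq (\EE A)\phi + \EE Y$, with $\EE Y < \infty$ by (P1). Combining the two cases and using $\phi \geq 0$, a uniform bound is $\Pi_{p^*} V \leq \lambda V + K$ with $\lambda := \EE A < 1$ and $K := \EE Y + \int z \, \gamma(\diff z) < \infty$.

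To convert this into $\mu(V) < \infty$, I would consider the chain $\{\phi_t\}$ under $\Pi_{p^*}$ started at $\phi_0 = 0$. Induction on the drift inequality gives $\EE \phi_t \leq K/(1-\lambda)$ for every $t$. Proposition~\ref{p:geoerg} established $V$-uniform ergodicity of $\Pi_{p^*}$, so the law of $\phi_t$ converges to $\mu$ in total variation and hence weakly; Skorohod representation plus Fatou's lemma then give $\int \phi \, \mu(\diff\phi) \leq \liminf_t \EE \phi_t \leq K/(1-\lambda) < \infty$. The main obstacle is this final passage: the naive identity $\mu(V) = \mu(\Pi_{p^*} V) \leq \lambda \mu(V) + K$ is circular unless $\mu(V)$ is already known to be finite, so one must either iterate from an integrable initial law as above, or directly invoke a standard Foster--Lyapunov moment bound (for example Proposition~4.24 of \cite{hairer2006ergodic}, already used in the proof of Proposition~\ref{p:geoerg}), which yields the result after noting that the sublevel sets $\{V \leq a\} = [0, a]$ are petite by Lemma~\ref{l:small}.
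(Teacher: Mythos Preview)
Your argument is correct but follows a different route from the paper.  The paper constructs the majorizing linear process $U_{t+1} = A_{t+1} U_t + Y_{t+1} + Z_{t+1}$, observes that it dominates the equilibrium dynamics \eqref{eq:defh} pathwise, and then invokes Vervaat's theorem on random affine recursions to bound the first moment of the stationary law of $\{U_t\}$, which by stochastic dominance controls $\int \phi\,\mu(\diff\phi)$.  You instead stay with the endogenous kernel $\Pi_{p^*}$ itself and verify a linear drift condition $\Pi_{p^*} V \leq \lambda V + K$ with $V(\phi)=\phi$, after which the moment bound follows from the same Foster--Lyapunov machinery (Lemma~\ref{l:small} and Proposition~4.24 of \cite{hairer2006ergodic}) already used in the proof of Proposition~\ref{p:geoerg}.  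Both arguments hinge on the same core observation $\EE A < 1$; your derivation of this via strict Jensen (using continuity of the law of $A$ to rule out degeneracy) is a clean substitute for the citation the paper gives.  The paper's approach is a bit shorter and imports a classical result; yours is more self-contained within the paper's own toolkit, avoids introducing an auxiliary process, and yields the explicit bound $\int \phi\,\mu(\diff\phi) \leq K/(1-\lambda)$.  One cosmetic remark: in your drift step for $\phi \geq \bar\phi(p^*)$, independence is not actually needed, since $\phi$ is a fixed scalar there and (P3) gives $G(\phi,W) \leq A\phi + Y$ pointwise; linearity of expectation suffices.
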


\begin{proof}
    Let $\{Z_t\}$ be {\sc iid} draws from $\gamma$.
    Consider the upper bound process $U_{t+1} = A_{t+1} U_t +
    Y_{t+1} + Z_{t+1}$. This dominates the equilibrium process pointwise, as
    can be seen by comparing it with \eqref{eq:defh}.  
    It follows that the stationary $\mu$ of \eqref{eq:defh} is stochastically
    dominated by the stationary distribution of the upper bound process
    whenever the latter exists.  Hence it suffices to show that the 
    stationary solution to the upper bound process has finite first moment.
    
    Since $\EE A_{t+1}^\alpha = 1$ and $\alpha > 1$, we must have $\EE A_{t+1}
    < 1$ (see, e.g., p.~48 of \cite{buraczewski2016stochastic}).  Finiteness
    of the first moment of the stationary solution to the upper bound process
    now follows from Theorem~5.1 of \cite{vervaat1979stochastic}, provided
    that the additive component $Y_{t+1} + Z_{t+1}$ of this process has finite
    first moment.  This is true under the stated assumptions, so the proof of
    Lemma~\ref{l:fmom} is done.
\end{proof}

\bibliographystyle{ecta}

\bibliography{bib_hht}

\end{document}